\documentclass[12pt]{article}

\usepackage{amsthm}
\usepackage{amsmath}
\usepackage{graphicx}
\usepackage{amssymb}

\newtheorem{thm}{Theorem}[section]
\newtheorem{lma}[thm]{Lemma}
\newtheorem{cor}[thm]{Corollory}

\newtheorem{prop}[thm]{Proposition}

\newtheorem*{adef}{Definition}

\DeclareMathOperator{\aut}{aut}
\DeclareMathOperator{\ColClique}{\#ColClique}
\DeclareMathOperator{\ColSubInd}{\#ColSubInd}
\DeclareMathOperator{\SubInd}{\#SubInd}
\DeclareMathOperator{\StrEmb}{\#StrEmb}
\DeclareMathOperator{\ColStrEmb}{\#ColStrEmb}
\DeclareMathOperator{\clique}{\#Clique}
\DeclareMathOperator{\constr}{constr}


\begin{document}

\title{Some hard families of parameterised counting problems \thanks{Research supported by EPSRC grant ``Computational Counting''}}
\date{September 2014}
\author{Mark Jerrum and Kitty Meeks \\
\small{School of Mathematical Sciences, Queen Mary University of London} \\ 
\texttt{\small{\{m.jerrum,k.meeks\}@qmul.ac.uk}}}
\maketitle

\begin{abstract}
We consider parameterised subgraph-counting problems of the following form: given a graph $G$, how many $k$-tuples of its vertices induce a subgraph with a given property?  A number of such problems are known to be \#W[1]-complete; here we substantially generalise some of these existing results by proving hardness for two large families of such problems.  We demonstrate that it is \#W[1]-hard to count the number of $k$-vertex subgraphs having any property where the number of distinct edge-densities of labelled subgraphs that satisfy the property is $o(k^2)$.  In the special case that the property in question depends only on the number of edges in the subgraph, we give a strengthening of this result which leads to our second family of hard problems.
\end{abstract}

\newcommand{\leqfptT}{ $\leq^{\textup{fpt}}_{\textup{T}}$ }
\newcommand{\leqfptP}{ $\leq^{\textup{fpt}}_{\textup{pars}}$ }
\newcommand{\paramcount}[1]{\textup{\textbf{p-\#}}\textsc{#1}}
\newcommand{\paramdec}[1]{\textup{\textbf{p-}}\textsc{#1}}
\newcommand{\genprob}{Induced Subgraph With Property}
\newcommand{\genprobcol}{Multicolour Induced Subgraph with Property}
\newcommand{\construct}{\constr(G,f_G,H,U)}
\newcommand{\constructind}{\overline{\constr}(G,f_G,H,W)}

\section{Introduction}

Parameterised counting problems were introduced by Flum and Grohe in \cite{flum04}.  Much previous research has focussed on problems of the following form:

\begin{quote}
\textit{Input:} An $n$-vertex graph $G = (V,E)$, and an integer $k$. \\
\textit{Parameter:} $k$. \\
\textit{Question:} How many (labelled) $k$-vertex subsets of $V$ induce graphs with a given property?
\end{quote}

All the existing literature concerning the complexity of solving non-trivial problems of this kind exactly consists of \#W[1]-completeness results, implying that the problems considered are unlikely to be solvable in time $f(k)n^{O(1)}$ for any function $f$; \footnote{See Section \ref{param} for definitions of concepts from parameterised complexity; we will also see in Section \ref{model} that all problems of this specific form belong to the first level of the W-hierarchy.} non-trivial in this sense means that there is no constant $c$ so that, for any $k \in \mathbb{N}$, we can determine whether a given graph has the desired property by examining only edges incident with some fixed set of $c$ vertices (a dichotomy result for a special class of these problems was very recently proved by Curticapean and Marx \cite{radu14}, in which parameterised tractability coincides exactly with this definition of triviality).  A number of these results concern the complexity of \emph{induced} subgraph counting problems: Chen and Flum \cite{chen07} demonstrated that problems of counting $k$-vertex induced paths and of counting $k$-vertex induced cycles are both \#W[1]-complete, and more generally Chen, Thurley and Weyer \cite{chen08} showed that it is \#W[1]-complete to count the number of induced subgraphs isomorphic to a given graph from the class $\mathcal{C}$ (\paramcount{Induced Subgraph Isomorphism}$(\mathcal{C})$) whenever $\mathcal{C}$ contains arbitrarily large graphs.  Other results concern the complexity of ``non-induced" subgraph counting problems, including the problems of counting the number of paths (\paramcount{Path}) and cycles (\paramcount{Cycle}) \cite{flum04}, matchings (\paramcount{Matching} \cite{radu13}), and connected subgraphs (\paramcount{Connected Induced Subgraph} \cite{connected}); the well-studied problem of counting the number of $k$-vertex cliques (\paramcount{Clique} \cite{flum04}) can be considered as either an induced or non-induced subgraph problem.  However, even considering these examples, the number of problems known to be complete for the parameterised complexity class \#W[1] as a whole remains relatively small.

In this paper, we add to this collection of hard parameterised counting problems by giving two conditions, either of which is sufficient to guarantee that a subgraph counting problem of this kind is \#W[1]-complete.  The two resulting families of hard parameterised subgraph counting problems contain some of the special cases already known to be hard (including \paramcount{Induced Subgraph Isomorphism}$(\mathcal{C})$) but are defined in a very general way and so include many problems whose complexity status was not previously known.

The precise formulation of our results makes use of the general model for parameterised subgraph counting problems introduced in \cite{connected} and described in Section \ref{model} below, but informally we show that counting labelled induced subgraphs with property $\Phi$ is \#W[1]-complete in each of the following situations:
\begin{enumerate}
\item $D_k = \{|E(H)|: |V(H)| = k$ and $\Phi$ is true for $H \}$ satisfies $|D_k| = o(k^2)$, that is, the property $\phi_k$ holds only for a decreasing proportion of the possible edge densities (Theorem \ref{bounded-hard}).
\item $\Phi$ is defined by a collection of $o(k^2)$ sub-intervals of $\{0,\ldots,\binom{k}{2}\}$, such that $\Phi$ is true for $H$ if and only if the number of edges in $H$ lies in one of these intervals (Theorem \ref{interval-hard}).
\end{enumerate}
The first class of problems includes those of counting $k$-vertex induced subgraphs which are planar, or have treewidth at most $t$ (for any fixed $t$), as any subgraph with either of these properties has $o(k)$ edges; it also includes the problem of counting the number of regular $k$-vertex subgraphs (that is, graphs that are $d$-regular for any $d \in \{0,\ldots,k-1\}$), as there are only $k$ possible edge-densities for a regular graph on $k$ vertices.  Problems that belong to the second class but not the first include, for example, counting all $k$-vertex subgraphs with edge-density at least $\alpha$, where alpha is some constant in $[0,1]$ that does not depend on $k$.

The proofs of our results will use ideas from Ramsey theory.  This field of extremal graph theory has previously been exploited to prove hardness results for parameterised counting problems, for example in \cite{chen08}.  In this paper, we need a different kind of Ramsey theoretic result that guarantees more than just the existence of a single clique or independent set, and to this end we derive a lower bound on the total number of $k$-vertex cliques and independent sets that must be present in any $n$-vertex graph (if $n$ is sufficiently large compared with $k$).

The rest of the paper is organised as follows.  In the remainder of this section, we introduce our key notation and definitions, mention the main ideas we will use from the theory of parameterised complexity, prove our Ramsey theoretic result, and finally give a formal definition of the model for subgraph counting problems.  In Section \ref{construction}, we define a pair of closely related constructions which form the basis of our later reductions, and demonstrate the important properties of these constructions.  Section \ref{hard} then contains the proofs of our \#W[1]-hardness results.

\subsection{Notation and definitions}
\label{notation}

Given a graph $G = (V,E)$, and a subset $U \subseteq V$, we write $G[U]$ for the subgraph of $G$ induced by the vertices of $U$.  For any $k \in \mathbb{N}$, we write $[k]$ as shorthand for $\{1,\ldots,k\}$, and $V^{(k)}$ for the set of all subsets of $V$ of size exactly $k$.  A \emph{permutation} on $[k]$ is a bijection $[k] \rightarrow [k]$.  We denote by $\overline{G}$ the \emph{complement} of $G$, that is, $\overline{G} = (V,E')$ where $E' = V^{(2)} \setminus E$. 

Two graphs $G$ and $H$ are \emph{isomorphic}, written $G \cong H$, if there exists a bijection $\theta: V(G) \rightarrow V(H)$ so that, for all $u,v \in V(G)$, we have $\theta(u)\theta(v) \in E(H)$ if and only if $uv \in E(G)$; $\theta$ is said to be an \emph{isomorphism} from $G$ to $H$.  An \emph{automorphism} on $G$ is an isomorphism from $G$ to itself.  We write $\aut(G)$ for the number of automorphisms of $G$.

If $G$ is coloured by some colouring $f: V \rightarrow [k]$, we say that a subset $U \subseteq V$ is \emph{colourful} (under $f$) if, for every $i \in [k]$, there exists exactly one vertex $u \in U$ such that $f(u) = i$; note that this can only be achieved if $U \in V^{(k)}$.

We will be considering labelled graphs, where a labelled graph is a pair $(H, \pi)$ such that $H$ is a graph and $\pi : [|V(H)|] \rightarrow V(H)$ is a bijection.  We write $\mathcal{L}(k)$ for the set of all labelled graphs on the vertex set $[k]$.  Given a graph $G = (V,E)$ and a $k$-tuple of vertices $(v_1,\ldots,v_k)$, $G[v_1,\ldots,v_k]$ denotes the labelled graph $(H,\pi)$ where $H = G[\{v_1,\ldots,v_k\}]$ and $\pi(i) = v_i$ for each $i \in [k]$.  If $\mathcal{H}$ is a set of labelled graphs, we set $\mathcal{H}^H = \{(H',\pi') \in \mathcal{H}: H' \cong H\}$.  

Given two graphs $G$ and $H$, a \emph{strong embedding} of $H$ in $G$ is an injective mapping $\theta: V(H) \rightarrow V(G)$ such that, for any $u,v \in V(H)$, $\theta(u)\theta(v) \in E(G)$ if and only if $uv \in E(H)$.  We denote by $\StrEmb(H,G)$ the number of strong embeddings of $H$ in $G$.  If $\mathcal{H}$ is a class of labelled graphs on $k$ vertices, we set 
\begin{align*}
\StrEmb(\mathcal{H},G) =  \qquad \qquad  & \\
|\{\theta: [k] \rightarrow V(G) \quad : \quad & \theta \text{ is injective and } \exists (H,\pi) \in \mathcal{H}  \text{ such that } \\ &  \theta(i)\theta(j) \in E(G) \iff \pi(i)\pi(j) \in E(H)\}|.
\end{align*}
If $G$ is also equipped with a $k$-colouring $f$, where $|V(H)| = k$, we write $\ColStrEmb(H,G,f)$ for the number of strong embeddings of $H$ in $G$ such that the image of $V(H)$ is colourful under $f$.  Similarly, we set 
\begin{align*}
\ColStrEmb(\mathcal{H},G,f) = \qquad \qquad & \\
 |\{\theta:[k] \rightarrow V(G) \quad : \quad & \theta \text{ is injective, } \exists (H,\pi) \in \mathcal{H} \text{ such that } \\ 
 						& \theta(i)\theta(j) \in E(G) \iff \pi(i)\pi(j) \in E(H), \\
 						& \text{and $\theta([k])$ is colourful under } f\}|.
\end{align*}
We can alternatively consider unlabelled embeddings of $H$ in $G$.  In this context we write $\SubInd(H,G)$ for the number of subsets $U \in V(G)^{(|H|)}$ such that $G[U] \cong H$.  Note that $\SubInd(H,G) = \StrEmb(H,G) / \aut(H)$.  If $\mathcal{H}$ is a class of labelled graphs, we set 
\begin{align*}
\SubInd(\mathcal{H},G) = |\{U \subseteq V(G) : & \quad \exists (H,\pi) \in \mathcal{H} \text{ such that } G[U] \cong H\}|. 
\end{align*} 
Once again, we can also consider the case in which $G$ is equipped with a $k$-colouring $f$.  In this case $\ColSubInd(H,G,f)$ is the number of colourful subsets $U$ such that $G[U] \cong H$, and 
\begin{align*}
\ColSubInd(\mathcal{H},G,f) = |\{U \subseteq V(G) \quad : \quad & \exists (H,\pi) \in \mathcal{H} \text{ such that } G[U] \cong H,  \\
                         & \text{ and $U$ is colourful under $f$}\}|.
\end{align*} 
Finally, we write $\ColClique_k(G,f)$ as shorthand for $\ColSubInd(K_k,G,f)$, where $K_k$ denotes a clique on $k$ vertices.

\subsection{Parameterised Counting Complexity}
\label{param}

In this section, we introduce key notions from parameterised counting complexity, which we will use in the rest of the paper.  A parameterised counting problem is a pair $(\Pi,\kappa)$ where, for some finite alphabet $\Sigma$, $\Pi: \Sigma^* \rightarrow \mathbb{N}_0$ is a function  and $\kappa: \Sigma^* \rightarrow \mathbb{N}$ is a parameterisation (a polynomial-time computable mapping).  An algorithm $A$ for a parameterised counting problem $(\Pi,\kappa)$ is said to be an \emph{fpt-algorithm} if there exists a computable function $f$ and a constant $c$ such that the running time of $A$ on input $I$ is bounded by $f(\kappa(I))|I|^c$.  Problems admitting an fpt-algorithm are said to belong to the class FPT.

To understand the complexity of parameterised counting problems, Flum and Grohe \cite{flum04} introduce two kinds of reductions between such problems.

\begin{adef}
Let $(\Pi,\kappa)$ and $(\Pi',\kappa')$ be parameterised counting problems.
\begin{enumerate}
\item An fpt parsimonious reduction from $(\Pi,\kappa)$ to $(\Pi',\kappa')$ is an algorithm that computes, for every instance $I$ of $\Pi$, an instance $I'$ of $\Pi'$ in time $f(\kappa(I))\cdot |I|^c$ such that $\kappa'(I') \leq g(\kappa(I))$ and 
$$\Pi(I) = \Pi'(I')$$ 
(for computable functions $f,g: \mathbb{N} \rightarrow \mathbb{N}$ and a constant $c \in \mathbb{N}$).  In this case we write $(\Pi,\kappa)$ \leqfptP $(\Pi',\kappa')$.

\item An fpt Turing reduction from $(\Pi,\kappa)$ to $(\Pi',\kappa')$ is an algorithm $A$ with an oracle to $\Pi'$ such that
\begin{enumerate}
\item $A$ computes $\Pi$,
\item $A$ is an fpt-algorithm with respect to $\kappa$, and
\item there is a computable function $g:\mathbb{N} \rightarrow \mathbb{N}$ such that for all oracle queries ``$\Pi'(I') = \; ?$'' posed by $A$ on input $I$ we have $\kappa'(I') \leq g(\kappa(I))$.
\end{enumerate}
In this case we write $(\Pi,\kappa)$ \leqfptT $(\Pi',\kappa')$.
\end{enumerate}

\end{adef}

Using these notions, Flum and Grohe introduce a hierarchy of parameterised counting complexity classes, \#W[$t$], for $t \geq 1$; this is the analogue of the W-hierarchy for parameterised decision problems.  In order to define this hierarchy, we need some more notions related to satisfiability problems.  

The definition of levels of the hierarchy uses the following problem, where $\psi$ is a first-order formula with a free relation variable of arity $s$.
\\

\hangindent=1cm
\paramcount{WD}$_{\psi}$ \\
\textit{Input:} A structure $\mathcal{A}$ and $k \in \mathbb{N}$. \\
\textit{Parameter:} $k$. \\
\textit{Question:} How many relations $S \subseteq A^s$ of cardinality $|S|=k$ are such that $\mathcal{A} \models \psi(S)$ (where $A$ is the universe of $\mathcal{A}$)? \\

If $\Psi$ is a class of first-order formulas, then \paramcount{WD}-$\Psi$ is the class of all problems \paramcount{WD}$_{\psi}$ where $\psi \in \Psi$.  The classes of first-order formulas $\Sigma_t$ and $\Pi_t$, for $t \geq 0$, are defined inductively.  Both $\Sigma_0$ and $\Pi_0$ denote the class of quantifier-free formulas, while, for $t \geq 1$, $\Sigma_t$ is the class of formulas
$$\exists x_1 \ldots \exists x_i \psi,$$
where $\psi \in \Pi_{t-1}$, and $\Pi_t$ is the class of formulas
$$\forall x_1 \ldots \forall x_i \psi,$$
where $\psi \in \Sigma_{t-1}$.  We are now ready to define the classes \#W[$t$], for $t \geq 1$.

\begin{adef}[Flum and Grohe \cite{flum04,flumgrohe}]
For $t \geq 1$, \#W[$t$] is the class of all parameterised counting problems that are fpt parsimonious reducible to a problem in \paramcount{WD}-$\Pi_t$.
\end{adef} 

Just as it is considered to be very unlikely that W[1] = FPT, it is very unlikely that there exists an algorithm running in time $f(k)n^{O(1)}$ for any problem that is hard for the class \#W[1] under either fpt parsimonious reductions or fpt Turing reductions.  One useful \#W[1]-complete problem which we will use in our reductions is the following:
\\

\hangindent=1cm
\paramcount{Multicolour Clique} \\
\textit{Input:} A graph $G = (V,E)$, and a $k$-colouring $f$ of $G$. \\
\textit{Parameter:} $k$. \\
\textit{Question:} What is $\ColClique_k(G,f)$, that is, the number of $k$-vertex cliques in $G$ that are colourful with respect to $f$? \\

This problem can easily be shown to be \#W[1]-hard (along the same lines as the proof of the W[1]-hardness of \paramdec{Multicolour Clique} in \cite{fellows09}) by means of a reduction from \paramcount{Clique}, shown to be \#W[1]-hard in \cite{flum04}.

\subsection{Ramsey theory}
\label{extremal}

To show that our constructions have the desired properties, we will exploit some Ramsey theoretic results.  First of all, we will use the following bound on Ramsey numbers which follows immediately from a result of Erd\H{o}s and Szekeres \cite{erdos-szekeres}:
\begin{thm}
Let $k \in \mathbb{N}$.  Then there exists $R(k) < 2^{2k}$ such that any graph on $n \geq R(k)$ vertices contains either a clique or independent set on $k$ vertices.
\label{ramsey}
\end{thm}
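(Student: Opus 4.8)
The plan is to establish the classical bound on the diagonal Ramsey number via the Erd\H{o}s--Szekeres recursion and then estimate the resulting binomial coefficient. First I would introduce the asymmetric Ramsey number $R(s,t)$, the least $n$ such that every graph on $n$ vertices contains either a clique on $s$ vertices or an independent set on $t$ vertices; this is monotone in the sense that every graph on $n \geq R(s,t)$ vertices then also has the property, since any $R(s,t)$ of its vertices induce a subgraph containing such a clique or independent set. The quantity $R(k)$ in the statement is just $R(k,k)$, and the key claim, proved by induction on $s+t$, is that $R(s,t) \leq \binom{s+t-2}{s-1}$ (which in particular shows that $R(s,t)$ is finite).

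For the base cases, when $s = 1$ or $t = 1$ a single vertex already forms both a clique on $1$ vertex and an independent set on $1$ vertex, so $R(1,t) = R(s,1) = 1 = \binom{s-1}{0} = \binom{t-1}{0}$. For the inductive step, take a graph $G$ on $R(s-1,t) + R(s,t-1)$ vertices, fix any vertex $v$, and partition the remaining vertices into the set $N$ of neighbours of $v$ and the set $M$ of non-neighbours of $v$. Since $|N| + |M| = R(s-1,t) + R(s,t-1) - 1$, at least one of $|N| \geq R(s-1,t)$ and $|M| \geq R(s,t-1)$ holds. In the first case $G[N]$ contains either an independent set on $t$ vertices, in which case we are done, or a clique on $s-1$ vertices, to which we adjoin $v$ to obtain a clique on $s$ vertices. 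The second case is symmetric: $G[M]$ contains either a clique on $s$ vertices or an independent set on $t-1$ vertices, which together with $v$ gives an independent set on $t$ vertices. Hence $R(s,t) \leq R(s-1,t) + R(s,t-1) \leq \binom{s+t-3}{s-2} + \binom{s+t-3}{s-1} = \binom{s+t-2}{s-1}$ by Pascal's identity, completing the induction.

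Finally, setting $s = t = k$ gives $R(k) = R(k,k) \leq \binom{2k-2}{k-1} < \sum_{i=0}^{2k-2}\binom{2k-2}{i} = 2^{2k-2} < 2^{2k}$; the first strict inequality is clear for $k \geq 2$, and for $k = 1$ one has $R(1) = 1 < 4 = 2^{2}$ directly. I do not expect a genuine obstacle here: this is the textbook Erd\H{o}s--Szekeres argument, and the only points needing a little care are the base case of the induction (a single vertex serves as both a clique and an independent set) and confirming that the binomial estimate comfortably beats the stated bound $2^{2k}$ rather than merely $2^{2k-2}$.
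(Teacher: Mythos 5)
Your proof is correct and is exactly the Erd\H{o}s--Szekeres argument that the paper invokes by citation (the paper states Theorem \ref{ramsey} as an immediate consequence of \cite{erdos-szekeres} without writing out the recursion). You have supplied the full derivation of $R(s,t)\le\binom{s+t-2}{s-1}$ and the resulting bound $R(k,k)\le\binom{2k-2}{k-1}<2^{2k}$, handling the base case $k=1$ separately, so there is no gap and no difference in method.
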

We will also need the following easy corollary of this result.
\begin{cor}
Let $G = (V,E)$ be an $n$-vertex graph, where $n \geq 2^{2k}$.  Then the number of $k$-vertex subsets $U \subset V$ such that $U$ induces either a clique or independent set in $G$ is at least
$$\frac{(2^{2k} - k)!}{(2^{2k})!}\frac{n!}{(n-k)!}.$$
\label{ramsey-cor}
\end{cor}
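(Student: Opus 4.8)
The plan is to prove this by a straightforward double-counting (averaging) argument, using Theorem \ref{ramsey} to guarantee a witness inside every sufficiently large vertex subset. Write $r = 2^{2k}$, and call a $k$-element subset $U \subseteq V$ \emph{good} if $G[U]$ is a clique or an independent set; let $N$ be the number of good subsets, the quantity to be bounded below. Since $R(k) < 2^{2k} \le n$, Theorem \ref{ramsey} applies to the subgraph induced by any $r$ vertices, so every $S \in V^{(r)}$ contains at least one good $k$-subset.

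First I would count, in two ways, the set of pairs $(S,U)$ with $S \in V^{(r)}$, $U \in V^{(k)}$, $U \subseteq S$, and $U$ good. Summing over the choice of $S$ first gives a lower bound of $\binom{n}{r}$ on the number of such pairs (at least one good $U$ for each $S$). Summing over the choice of $U$ first gives the exact count $N\binom{n-k}{r-k}$, since a fixed $k$-set is contained in exactly $\binom{n-k}{r-k}$ of the $r$-sets. Comparing the two expressions yields
$$N \;\ge\; \frac{\binom{n}{r}}{\binom{n-k}{r-k}} \;=\; \frac{n!\,(r-k)!}{r!\,(n-k)!} \;=\; \frac{(2^{2k}-k)!}{(2^{2k})!}\cdot\frac{n!}{(n-k)!},$$
which is precisely the claimed bound; here $r-k = 2^{2k}-k \ge 0$ for every $k \ge 1$, so all the factorials and binomial coefficients involved are well-defined and $\binom{n-k}{r-k}$ is nonzero.

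There is essentially no obstacle: the argument is routine once one decides to average over the $r$-element subsets rather than over individual witnesses, and the only point to verify is that the ratio of binomial coefficients collapses to exactly the expression in the statement. If one preferred, the same estimate could be phrased probabilistically — picking a uniformly random $r$-subset of $V$ and taking the expectation of the number of good $k$-subsets it contains — but the counting formulation above is the cleanest.
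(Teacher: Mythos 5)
Your double-counting argument is correct and is essentially the same proof as in the paper: both count the incidences between $2^{2k}$-element subsets and the good $k$-subsets they contain, use Ramsey's theorem to ensure each large subset contributes at least one incidence, and divide by $\binom{n-k}{2^{2k}-k}$. The only difference is cosmetic — you make the pair-counting explicit, whereas the paper states the averaging bound directly.
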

\begin{proof}
We shall say that the subset $X \in V^{(k)}$ is \emph{interesting} if $X$ induces either a clique or an independent set in $G$.  By Ramsey's Theorem, we know that every subset $U \subset V$ with $|U| = 2^{2k}$ must contain at least one interesting subset.

The number of subsets of $V$ of size exactly $2^{2k}$ is $\binom{n}{2^{2k}}$.  Moreover, the number of such sets to which any given $k$-vertex subset can belong is $\binom{n-k}{2^{2k} - k}$.  Thus, in order for every $U \in V^{(2^k)}$ to contain at least one interesting subset, the number of interesting subsets must be at least
\begin{align*}
\frac{\binom{n}{2^{2k}}}{\binom{n-k}{2^{2k}-k}} = \frac{\frac{n!}{(2^{2k})!(n-2^{2k})!}}{\frac{(n-k)!}{(2^{2k} - k)!(n-2^{2k})!}}
 = \frac{(2^{2k} - k)!}{(2^{2k})!}\frac{n!}{(n-k)!},
\end{align*}
as required.
\end{proof}

\subsection{The Model}
\label{model}

The classes of counting problems we consider fall within the scope of the general model introduced in \cite{connected}; this model describes parameterised counting problems in which the goal is to count labelled subgraphs with particular properties.  We repeat the definition here for completeness, before extending it to colourful subgraph counting problems (which we will need for intermediate stages in our reductions).  We will finish with some examples of how problems that have previously been studied in the literature can be expressed in this framework.

Let $\Phi$ be a family $(\phi_1,\phi_2,\ldots)$ of functions $\phi_k: \mathcal{L}(k) \rightarrow \{0,1\}$, such that the function mapping $k \mapsto \phi_k$ is computable.  For any $k$, we write $\mathcal{H}_{\phi_k}$ for the set $\{(H,\pi) \in \mathcal{L}(k): \phi_k(H,\pi) = 1\}$, and set $\mathcal{H}_{\Phi} = \bigcup_{k \in \mathbb{N}} \mathcal{H}_{\phi_k}$.

The general problem is then defined as follows.
\\

\hangindent=1cm
\paramcount{\genprob}($\Phi$) \\
\textit{Input:} A graph $G = (V,E)$ and an integer $k$.\\
\textit{Parameter:} $k$. \\
\textit{Question:} What is $\StrEmb(\mathcal{H}_{\phi_k},G)$, that is, the cardinality of the set $\{(v_1,\ldots,v_k) \in V^k: \phi_k(G[v_1,\ldots,v_k]) = 1 \}$? \\

We can equivalently regard this problem as that of counting induced labelled $k$-vertex subgraphs that belong to $\mathcal{H}_{\Phi}$.

A property $\Phi$ is said to be \emph{symmetric} if the value of $\phi_k(H,\pi)$ depends only on the graph $H$ and not on the labelling of the vertices; this corresponds to ``unlabelled'' graph problems, such as \paramcount{clique}.  A related problem for symmetric properties was also defined in \cite{connected}:
\\

\hangindent=1cm
\paramcount{Induced Unlabelled Subgraph With Property}($\Phi$) \\
\textit{Input:} A graph $G = (V,E)$ and $k \in \mathbb{N}$.\\
\textit{Parameter:} $k$. \\
\textit{Question:} What is $\SubInd(\mathcal{H}_{\phi_k})$, that is, the cardinality of the set $\{\{v_1,\ldots,v_k\} \in V^{(k)}: \phi_k(G[v_1,\ldots,v_k]) = 1 \}$? \\

For any symmetric property $\Phi$, the output of \paramcount{\genprob}$(\Phi)$ is exactly $k!$ times the output of \paramcount{Induced Unlabelled Subgraph With Property}$(\Phi)$, for any graph $G$ and $k \in \mathbb{N}$.

These problems were shown to lie in \#W[1] in \cite{connected}:
\begin{prop}[\cite{connected}]
For any $\Phi$, the problem \paramcount{\genprob}($\Phi$) belongs to \#W[1].  If $\Phi$ is symmetric, then the same is true for \paramcount{Induced Unlabelled Subgraph With Property}$(\Phi)$.
\label{in-W}
\end{prop}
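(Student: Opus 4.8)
The plan is to give, for every family $\Phi$, an fpt parsimonious reduction from \paramcount{\genprob}($\Phi$) to a problem in \paramcount{WD}-$\Pi_1$, which by definition of \#W[1] suffices; concretely the target will be a clique-counting problem. First I would replace $\mathcal{H}_{\phi_k}$ by a handier object. Let $\mathcal{E}_k$ be the set of graphs $L$ on vertex set $[k]$ for which there is some $(H,\pi)\in\mathcal{H}_{\phi_k}$ with $\pi(i)\pi(j)\in E(H)\iff ij\in E(L)$ for all $i,j\in[k]$. Since $k\mapsto\phi_k$ is computable and $\mathcal{L}(k)$ is finite, $\mathcal{E}_k$ is computable from $k$, and $|\mathcal{E}_k|\le 2^{\binom{k}{2}}$, a bound depending on the parameter alone. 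For any injective $\theta\colon[k]\to V(G)$, let $L_\theta$ be the graph on $[k]$ with edge set $\{ij:\theta(i)\theta(j)\in E(G)\}$; then $\theta$ is counted by $\StrEmb(\mathcal{H}_{\phi_k},G)$ precisely when $L_\theta\in\mathcal{E}_k$, so that $\StrEmb(\mathcal{H}_{\phi_k},G)$ equals the number of pairs $(L,\theta)$ with $L\in\mathcal{E}_k$, $\theta$ injective, and $\theta(i)\theta(j)\in E(G)\iff ij\in E(L)$ for all $i,j$.

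Given an input $(G,k)$, I would then build an auxiliary graph $G'$ with vertex set $\mathcal{E}_k\times[k]\times V(G)$, joining $(L,i,u)$ to $(L',j,w)$ precisely when $L=L'$, $i\ne j$, $u\ne w$ and $uw\in E(G)\iff ij\in E(L)$. A $k$-clique of $G'$ must have a common first coordinate $L$ (vertices with different graphs are never adjacent), $k$ distinct middle coordinates (hence all of $[k]$) and $k$ distinct third coordinates, so it has the form $\{(L,i,\theta(i)) : i\in[k]\}$ for a well-defined injection $\theta$, and adjacency across all $\binom{k}{2}$ pairs says exactly that $\theta(i)\theta(j)\in E(G)\iff ij\in E(L)$ for all $i\ne j$. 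Conversely every such pair $(L,\theta)$ yields one such clique, and distinct pairs yield distinct cliques; hence the number of $k$-cliques of $G'$ equals $\StrEmb(\mathcal{H}_{\phi_k},G)$. Because $|V(G')|\le 2^{\binom{k}{2}}\cdot k\cdot|V(G)|$ and $\mathcal{E}_k$ is computable from $k$, the graph $G'$ is producible in time $f(k)\,|V(G)|^{O(1)}$, and the parameter $k$ is unchanged. Counting $k$-cliques of $G'$ is exactly the problem \paramcount{WD}$_\psi$ on input $(G',k)$ for the $\Pi_1$-formula $\psi(S)=\forall x\,\forall y\,\bigl((Sx\wedge Sy\wedge\neg(x=y))\to E_{G'}xy\bigr)$, where $E_{G'}$ denotes the edge relation of $G'$; this is the required reduction, so \paramcount{\genprob}($\Phi$) belongs to \#W[1].

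For symmetric $\Phi$, the value $\phi_k(H,\pi)$ depends only on the graph $H$, so $\mathcal{E}_k$ is closed under isomorphism and equals the set of graphs on $[k]$ whose underlying graph has ``the property''; \paramcount{Induced Unlabelled Subgraph With Property}($\Phi$) then asks for the number of $U\in V(G)^{(k)}$ with $G[U]$ having the property. Fixing a linear order on $V(G)$ and writing $U=\{u_1<\cdots<u_k\}$, such a $U$ is counted iff the graph $L_U$ on $[k]$ with $ij\in E(L_U)\iff u_iu_j\in E(G)$ lies in $\mathcal{E}_k$. I would accordingly modify the previous construction to a graph $G''$ on the same vertex set, now requiring in addition, for an edge between $(L,i,u)$ and $(L,j,w)$, that $i<j\iff u<w$. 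A $k$-clique of $G''$ is then a pair $(L,\theta)$ exactly as before in which $\theta$ is moreover increasing, i.e.\ it is the increasing enumeration $\theta$ of some set $U\in V(G)^{(k)}$ paired with $L=L_U$; such a clique exists iff $L_U\in\mathcal{E}_k$, so the $k$-cliques of $G''$ are in bijection with the sets $U$ being counted. The same $\Pi_1$-formula $\psi$ thus also witnesses that \paramcount{Induced Unlabelled Subgraph With Property}($\Phi$) belongs to \#W[1].

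The argument is essentially bookkeeping around an explicit gadget, and I do not expect a genuine obstacle. The points that need care are that both reductions are genuinely \emph{parsimonious} --- one must verify an exact bijection between $k$-cliques of the auxiliary graph and the objects counted, not merely that either is recoverable from the other --- and, in the symmetric case, that the single added monotonicity constraint removes precisely the $k!$-fold overcounting coming from the orderings of the chosen vertices. The fpt time bound is immediate, since $|\mathcal{E}_k|\le 2^{\binom{k}{2}}$ is a function of $k$ only.
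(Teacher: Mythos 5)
Proposition~\ref{in-W} is cited from \cite{connected} and is not proved in this paper, so there is no in-paper argument to compare against; I therefore review your proof on its own terms. Your construction is correct and is the standard route to placing such problems in \#W[1]: the pattern set $\mathcal{E}_k$ is computable from $k$ alone and has size at most $2^{\binom{k}{2}}$; the $k$-cliques of the product gadget on $\mathcal{E}_k\times[k]\times V(G)$ are in exact bijection with the tuples contributing to $\StrEmb(\mathcal{H}_{\phi_k},G)$ (and, once the monotonicity constraint is imposed in the symmetric case, with the $k$-sets contributing to $\SubInd(\mathcal{H}_{\phi_k},G)$, since symmetry makes $\mathcal{E}_k$ isomorphism-closed so that $L_U \in \mathcal{E}_k$ is equivalent to $U$ being counted); and $k$-clique counting is precisely \paramcount{WD}$_{\psi}$ for your $\Pi_1$ formula $\psi$, so the map $(G,k)\mapsto(G',k)$ is a valid fpt parsimonious reduction.
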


We now observe that the complexities of solving \paramcount{\genprob}$(\Phi)$ and its complement, \paramcount{\genprob}$(\overline{\Phi})$, must be the same.
\begin{prop}
Let $\Phi$ be a family $(\phi_1,\phi_2,\ldots)$ of functions $\phi_k: \mathcal{L}(k) \rightarrow \{0,1\}$, such that the function mapping $k \mapsto \phi_k$ is computable, and let $\overline{\Phi}$ be the family $(\overline{\phi_1},\overline{\phi_2},\ldots)$ of functions $\overline{\phi_k}: \mathcal{L}(k) \rightarrow \{0,1\}$ defined by $\overline{\phi_k}(H,\pi) = 1 - \phi_k(H,\pi)$.  Then \paramcount{\genprob}$(\Phi)$ \leqfptT \paramcount{\genprob}$(\overline{\Phi})$.
\label{complement}
\end{prop}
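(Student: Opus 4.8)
The plan is to reduce $\paramcount{\genprob}(\Phi)$ to $\paramcount{\genprob}(\overline{\Phi})$ via an fpt Turing reduction, exploiting the elementary identity that for any $k$-vertex graph $G$ on $n$ vertices, every $k$-tuple of distinct vertices either induces a labelled graph satisfying $\phi_k$ or one satisfying $\overline{\phi_k}$, but never both, while $k$-tuples with repeated vertices satisfy neither (since $\phi_k$ is only defined on $\mathcal{L}(k)$, i.e.\ on genuine $k$-vertex subgraphs). Concretely, the number of injective $k$-tuples from $V(G)$ is exactly $n!/(n-k)!$, and this splits as $\StrEmb(\mathcal{H}_{\phi_k},G) + \StrEmb(\mathcal{H}_{\overline{\phi_k}},G)$.

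First I would observe that, given an instance $(G,k)$ of $\paramcount{\genprob}(\Phi)$, the reduction computes $n = |V(G)|$, makes a single oracle call to $\paramcount{\genprob}(\overline{\Phi})$ on the same instance $(G,k)$ to obtain $N := \StrEmb(\mathcal{H}_{\overline{\phi_k}},G)$, and then returns $\frac{n!}{(n-k)!} - N$. One must check the three conditions in the definition of an fpt Turing reduction: the algorithm clearly computes $\Pi = \paramcount{\genprob}(\Phi)$ by the identity above; it runs in polynomial time in $|G|$ (the falling factorial is a polynomially-sized integer computable in polynomial time, and there is a single oracle call), hence is an fpt-algorithm; and the parameter of the query, namely $k$, equals the input parameter, so the bound $\kappa'(I') \le g(\kappa(I))$ holds trivially with $g$ the identity.

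The only genuine content is the counting identity, so I would state it as a short claim: an injective map $\theta\colon[k]\to V(G)$ contributes to $\StrEmb(\mathcal{H}_{\phi_k},G)$ iff $\phi_k(G[\theta(1),\dots,\theta(k)]) = 1$, and to $\StrEmb(\mathcal{H}_{\overline{\phi_k}},G)$ iff $\phi_k(G[\theta(1),\dots,\theta(k)]) = 0$, exactly one of which holds; summing over the $n!/(n-k)!$ injective maps gives the displayed equation. I do not anticipate a real obstacle here — the main point requiring a moment's care is simply to confirm that $\StrEmb(\mathcal{H}_{\phi_k},G)$ as defined in Section~\ref{model} counts ordered tuples of \emph{distinct} vertices (equivalently injective maps $[k]\to V(G)$), which is immediate from the set $\{(v_1,\dots,v_k)\in V^k : \phi_k(G[v_1,\dots,v_k])=1\}$ together with the fact that $G[v_1,\dots,v_k]\in\mathcal{L}(k)$ forces the $v_i$ to be distinct. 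By symmetry the same argument shows $\paramcount{\genprob}(\overline{\Phi}) \leqfptT \paramcount{\genprob}(\Phi)$, so the two problems are fpt-Turing-equivalent, which is the assertion of the proposition.
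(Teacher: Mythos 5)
Your proof is correct and follows essentially the same approach as the paper: a single oracle call plus the complementation identity $\StrEmb(\mathcal{H}_{\phi_k},G) = \binom{n}{k}k! - \StrEmb(\mathcal{H}_{\overline{\phi_k}},G)$. Your explicit remark that the count ranges over injective tuples (equivalently, that $G[v_1,\dots,v_k]\in\mathcal{L}(k)$ forces distinctness) makes the identity a little more self-contained than the paper's version, but the underlying argument is identical.
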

\begin{proof}
Observe that, for any graph $G$,
\begin{align*}
\StrEmb(\mathcal{H}_{\phi_k},G) & = \sum_{(v_1,\ldots,v_k) \in V(G)^k} \phi_k(G[v_1,\ldots,v_k]) \\
								 & = \binom{n}{k} k! - \sum_{(v_1,\ldots,v_k) \in V(G)^k} \overline{\phi_k}(G[v_1,\ldots,v_k]) \\
								 & = \binom{n}{k} k! - \StrEmb(\mathcal{H}_{\overline{\phi_k}},G).
\end{align*}
Thus it is clear that we can solve \paramcount{\genprob}($\Phi$) in polynomial time using a single oracle call to \paramcount{\genprob}($\Phi$), where the parameter value is the same for both problems; this completes the reduction.
\end{proof}
This result implies that, if \paramcount{\genprob}$(\Phi)$ is in FPT, so is \paramcount{\genprob}$(\overline{\Phi})$, and if \paramcount{\genprob}$(\Phi)$ is W[1]-complete under fpt Turing reductions then so is \paramcount{\genprob}$(\overline{\Phi})$.

We also define a multicolour version of this problem; it is straightforward to verify that this variant also lies in the class \#W[1] for every property $\Phi$.
\\

\hangindent=1cm
\paramcount{\genprobcol}($\Phi$) \\
\textit{Input:} A graph $G = (V,E)$, an integer $k$ and colouring $f: V \rightarrow [k]$.\\
\textit{Parameter:} $k$. \\
\textit{Question:} What is $\ColStrEmb(\mathcal{H}_{\phi_k},G,f)$, that is, the cardinality of the set $\{(v_1,\ldots,v_k) \in V^k: \phi_k(G[v_1,\ldots,v_k]) = 1$ and $\{f(v_1),\ldots,f(v_k)\} = [k] \}$? \\

We make the following simple observation regarding the complexities of \paramcount{\genprobcol}($\Phi$) and \paramcount{\genprob}($\Phi$).

\begin{lma}
For any family $\Phi$, we have \paramcount{\genprobcol}$(\Phi)$ \leqfptT \paramcount{\genprob}$(\Phi)$.
\label{uncol-col}
\end{lma}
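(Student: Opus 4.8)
The plan is to reduce \paramcount{\genprobcol}$(\Phi)$ to \paramcount{\genprob}$(\Phi)$ by the standard inclusion-exclusion argument over colour classes, passing from the coloured count to the uncoloured count on a family of induced subgraphs of the input. Given an instance $(G,k,f)$ of \paramcount{\genprobcol}$(\Phi)$, for each subset $S \subseteq [k]$ let $G_S = G[f^{-1}(S)]$ be the subgraph induced by the vertices whose colour lies in $S$. Each $G_S$ is an instance of \paramcount{\genprob}$(\Phi)$ with the same parameter $k$, so the oracle call $\StrEmb(\mathcal{H}_{\phi_k},G_S)$ respects the parameter bound required of an fpt Turing reduction (here $g$ is the identity), and there are only $2^k \le f(k)$ such calls, so the algorithm runs in fpt time.

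The key identity is that a $k$-tuple $(v_1,\dots,v_k)$ counted by $\StrEmb(\mathcal{H}_{\phi_k},G_S)$ is exactly one whose vertices all have colours in $S$ and which satisfies $\phi_k$; among these, the colourful ones are precisely those using all $k$ colours, i.e.\ those with $S = [k]$. By inclusion-exclusion over the set of colours actually used,
\begin{align*}
\ColStrEmb(\mathcal{H}_{\phi_k},G,f) = \sum_{S \subseteq [k]} (-1)^{k - |S|}\, \StrEmb(\mathcal{H}_{\phi_k},G_S).
\end{align*}
To see this, fix a tuple $(v_1,\dots,v_k)$ with $\phi_k(G[v_1,\dots,v_k]) = 1$ and let $T = \{f(v_1),\dots,f(v_k)\}$ be the set of colours it uses; this tuple contributes to $\StrEmb(\mathcal{H}_{\phi_k},G_S)$ precisely when $T \subseteq S$, so its total contribution to the right-hand side is $\sum_{T \subseteq S \subseteq [k]} (-1)^{k-|S|}$, which equals $1$ if $T = [k]$ and $0$ otherwise by the standard alternating-sum cancellation. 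Hence the right-hand side counts exactly the colourful $\phi_k$-tuples, as required. The reduction then makes the $2^k$ oracle calls, forms the signed sum, and outputs the result.

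There is no real obstacle here: the only point needing a little care is to check that forming each $G_S$ (together with the correct integer $k$, unchanged) is a legitimate polynomial-time construction and that the alternating sum telescopes correctly, which is the displayed cancellation above. I would present the identity, verify it by the contribution argument, and note the fpt bounds on the number and parameter of the oracle queries to conclude.
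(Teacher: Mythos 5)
Your proposal is correct and matches the paper's proof essentially line for line: both reduce via inclusion-exclusion over colour classes, forming $G_S = G[f^{-1}(S)]$ and evaluating $\sum_{S\subseteq[k]}(-1)^{k-|S|}\StrEmb(\mathcal{H}_{\phi_k},G_S)$, with the same observations about the number and parameter of oracle calls. Your explicit verification of the alternating-sum cancellation via the contribution of a fixed tuple is a slightly more spelled-out version of the same argument.
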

\begin{proof}
We give an fpt Turing reduction from \paramcount{\genprobcol}($\Phi$) to \paramcount{\genprob}($\Phi$), using an inclusion-exclusion method (similar to those previously used in, for example, \cite{chen08,dalmau04}).  Let $G$, with colouring $f$, be the $k$-coloured graph in an instance of \paramcount{\genprobcol}$(\Phi)$.  

Suppose we have an oracle to \paramcount{\genprob}($\Phi$), so for any $G' = (V_{G'},E_{G'})$ and $k' \in \mathbb{N}$ we can obtain the cardinality of the set
$$X_{G'} = \{(v_1,\ldots,v_{k'}) \in V_{G'}^{k'}: \phi_{k'}(G'[v_1,\ldots,v_{k'}]) = 1 \}$$
in constant time.  Our goal is to compute the cardinality of the set
\begin{align*}
Y = \{(v_1,\ldots,v_k) \in V^k \quad : \quad & \phi_k(G[v_1,\ldots,v_k]) = 1 \text{ and } \\
& \{f(v_1),\ldots,f(v_k)\} = [k] \}.
\end{align*}
It is clear that, if for each $I \subseteq [k]$ we set 
$$N_I = |\{(v_1,\ldots,v_k) \in V^k: \phi_k(G[v_1,\ldots,v_k]) \text{ and } \{f(v_1),\ldots,f(v_k)\} \subseteq I\}|,$$
then the cardinality of $Y$ can be written as
$$|Y| = \sum_{I \subseteq [k]} (-1)^{k - |I|}N_I.$$
But for any $I \subseteq [k]$, we have
$$N_I = |X_{G[f^{-1}(I)]}|,$$
that is, $N_I$ is equal to the number of tuples of vertices in $G$ which satisfy $\phi_k$ and are such that all the vertices have colours from $I$.  Thus we can compute each of the $2^k$ values of $N_I$ for $I \subseteq [k]$ in time $n^{O(1)}$ using an oracle call, and the parameter for each oracle call is exactly $k$.  This gives an fpt Turing reduction from \paramcount{\genprobcol}($\Phi$) to \paramcount{\genprob}($\Phi$).
\end{proof}

\subsubsection*{Examples}
\label{model-egs}

We now give some examples to illustrate how, up to known constant factors depending only on the parameter, our model generalises previously studied problems.

\paragraph*{A symmetric property: \paramcount{Clique}}

Set 
\begin{equation*}
\phi_k(H,\pi) = \begin{cases}
					1	& \text{if $H \cong K_k$} \\
					0   & \text{otherwise.}
			    \end{cases}
\end{equation*}
Then the output of \paramcount{\genprob}$(\Phi)$ on the input $(G,k)$ is equal to $k!$ times the output of \paramcount{Clique} on the same input; in this situation the outputs of \paramcount{Induced Unlabelled Subgraph With Property}$(\Phi)$ and \paramcount{Clique} will be identical.

\paragraph*{An induced subgraph counting problem: \paramcount{Induced Subgraph Isomorphism}$(\mathcal{C})$}

Let $\mathcal{C}$ be any recursively enumerable class of unlabelled graphs which contains at most one graph on $k$ vertices, for each $k \in \mathbb{N}$.  We set
\begin{equation*}
\phi_k(H,\pi) = \begin{cases}
					1	& \text{if $H \in \mathcal{C}$} \\
					0	& \text{otherwise.}
				\end{cases}
\end{equation*}
Then, for any $\{v_1,\ldots,v_k\} \in V^{(k)}$ such that $G[\{v_1,\ldots,v_k\}] \cong C \in \mathcal{C}$, we will have $\phi_k(G[v_{\sigma(1)},\ldots,v_{\sigma(k)}] = 1$ for every permutation $\sigma: [k] \rightarrow [k]$.  Thus, the output of \paramcount{\genprob}$(\Phi)$ on the input $(G,k)$ is equal to $k!$ times the number of $k$-vertex induced subgraphs in $G$ that belong to $\mathcal{C}$.

\paragraph*{A non-induced subgraph counting problem: \paramcount{Sub}$(\mathcal{H})$}

The problem \paramcount{Sub}$(\mathcal{H})$ is that of counting (not necessarily induced) copies of graphs from a set $\mathcal{H} = \{H_k: k \in I_{\mathcal{H}} \subseteq \mathbb{N}\}$ where, for each $k$, $H_k$ has $k$ vertices.  We begin with a concrete example, \paramcount{Matching}.  Here, $I_{\mathcal{H}}$ is the set of all even natural numbers and, for each $k \in I_{\mathcal{H}}$, $H_k$ is the graph consisting of $k/2$ disjoint edges.  We can then set
\begin{equation*}
\phi_k(H,\pi) = \begin{cases}
					1	& \text{if $k$ is even and, for $1 \leq i \leq k/2$, $\pi(2i-1)\pi(2i) \in E(H)$} \\
					0	& \text{otherwise,}
				\end{cases}
\end{equation*}
and the output of \paramcount{\genprob}$(\Phi)$ on the input $(G,k)$ will be equal to $(k/2)!2^{k/2}$ times the output of \paramcount{Matching} (since $(k/2)!2^{k/2}$ is the number of automorphisms of a $k/2$-edge matching: there are $(k/2)!$ ways to map a set of $k/2$ edges to itself, and each edge can be mapped to any given edge in two different ways).

More generally, to count copies of graphs from $\mathcal{H}$ we fix, for each $k \in I_{\mathcal{H}}$, a labelling $\pi_k: [k] \rightarrow V(H_k)$, and set 
\begin{equation*}
\phi_k(H,\pi) = \begin{cases}
					1	& \text{if $k \in I_{\mathcal{H}}$ and, for every $uv \in E(H_k)$,}\\
						& \qquad \qquad \text{we have } (\pi \circ \pi_k^{-1}(u)) (\pi \circ \pi_k^{-1}(v))  \in E(H)\\
					0	& \text{otherwise.}
				\end{cases}
\end{equation*}
The output of \paramcount{\genprob}$(\Phi)$ on the input $(G,k)$ is then equal to $\aut(H_k)$ times the output of \paramcount{Sub}$(\mathcal{H})$.

It will follow from Theorem \ref{bounded-hard} that \paramcount{Sub}$(\mathcal{H})$ is hard for any class $\mathcal{H} = \{H_k: k \in I_H\}$ such that $e(H_k) = (1 - o(1))\binom{k}{2}$ as $k \rightarrow \infty$.

Observe that problems of this kind are the first examples of problems for which we need our model to count \emph{labelled} subgraphs: if we were only able to count unlabelled subgraphs, we could not give different weight to induced subgraphs containing different numbers of distinct copies of graphs from $\mathcal{H}$, and could only define properties corresponding to induced $k$-vertex subgraphs that contain at least $r$ copies of $H_k$.  For example, we could express the problem of counting the number of induced $k$-vertex subgraphs that contain at least one perfect matching using only unlabelled subgraphs, but to translate \paramcount{Matching} into such a framework we need to make use of the labelling.

\section{The construction}
\label{construction}

In this section we describe a pair of closely related constructions, which will be used for hardness reductions in Section \ref{hard}.  Both constructions take as input two graphs $G$ and $H$, where $G=(V_G,E_G)$ is equipped with a $k$-colouring $f_G: V_G \rightarrow [k]$, and $H$ contains either a clique or independent set on $k$ vertices; the two different constructions correspond to these two possibilities for $H$.  We begin in Section \ref{const-def} by describing the constructions in both cases, and then in Section \ref{const-props} we prove a number of key facts about the two constructions.

\subsection{Definition of the construction}
\label{const-def}

As explained above, we give two slightly different constructions depending on whether $H$ contains a clique or an independent set on $k$ vertices.  We begin with the former case.

\subsubsection*{$H$ contains a clique}

In this case we assume that there exists a set $U \in V(H)^{(k)}$ that induces a clique in $H$.  We now define a new graph, $\construct$, and a colouring $f_{\construct}$ of its vertices.  Suppose that $H = (V_H,E_H)$, renaming vertices if necessary so that $V_H \cap V_G = \emptyset$; set $V_H' = V_H \setminus U$.  We then set
$$V(\construct) = V_G \cup V_H'.$$
Let $f_H: V_H \rightarrow [|V_H|]$ be any colouring of $V_H$ which assigns a distinct colour to each vertex, and which has the property that, for every $u \in U$, $f_H(u) \in [k]$.  Different choices of this colouring $f_H$ may result in different graphs $\construct$, but the properties of $\construct$ that we will exploit hold regardless of the choice of $f_H$, provided that the function satisfies these requirements.  We will set $E(\construct) = E_G \cup E_1 \cup E_2$ where
$$E_1 = \{uv \in E_H: u,v \in V_H' \},$$
and
\begin{align*}
E_2 = \{vw \quad : \quad & v \in V_G, w \in V_H', \exists u \in U \text{ such that } \\
& uw \in E_H \text{ and } f_H(u) = f_G(v)\},
\end{align*}
so $\construct$ contains all internal edges in $G$ and $H \setminus U$, together with edges from each vertex $w$ in $V_G$ to the vertices in $V_H'$ which are adjacent, in $H$, to the vertex of $U$ assigned colour $f_G(w)$ by $f_H$.

Finally, we define the colouring $f_{\construct}: V(\construct) \rightarrow [|V(H)|]$ by setting
\[
f_{\construct}(v) = \begin{cases}
		  f_H(v) & \text{if $v \in V_H'$} \\
		  f_G(v) & \text{if $v \in V_G$.}
	   \end{cases}
\]

\subsubsection*{$H$ contains an independent set}  

In this case we assume that there exists a set $W \subset V(H)$ such that $W$ induces an independent set in $H$.  The construction for this case is very similar, and in fact we can define our new graph $\constructind$ in terms of the first construction given above.

Note that, as $W$ induces an independent set in $G$, it must be that $W$ induces a clique in $\overline{H}$.  Thus we can apply the construction above to $G$ and $\overline{H}$ to obtain a graph $\constr(G,f_G,\overline{H},W)$.  We define $\constructind$ to be the complement of this graph, that is,
$$\constructind = \overline{\constr(G,f_G,\overline{H},W)}.$$
Once again, we equip our new graph with a colouring; in this case we set 
$$f_{\constructind} = f_{\constr(G,f_G,\overline{H},W)},$$
so the colouring is in fact exactly the same as that used in the case that $H$ contains a clique.

\subsection{Properties of the construction}
\label{const-props}

In this section we prove a number of important results about our constructions, which will be essential for the proofs in Section \ref{hard} below.  We begin by proving the key property of our constructions; we consider first the case for $\construct$.

\begin{lma}
Set $\widetilde{G} = \construct$, and let $X$ be a colourful subset of $\widetilde{G}$ with respect to $f_{\widetilde{G}}$.  Then the subgraph of $\widetilde{G}$ induced by $X$ is isomorphic either to $H$ or to a graph obtainable from $H$ by deleting one or more edges from $H[U]$.  Moreover, the number of edges deleted is equal to the number of non-edges in $\widetilde{G}[X \cap V_G]$.
\label{lose-U-edges}
\end{lma}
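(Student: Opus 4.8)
The plan is to analyze which edges of $\widetilde G$ can appear among the vertices of a colourful set $X$, tracking carefully the three types of edges in $E(\widetilde G) = E_G \cup E_1 \cup E_2$. Write $X = X_G \cup X_H$ where $X_G = X \cap V_G$ and $X_H = X \cap V_H'$. Because $X$ is colourful under $f_{\widetilde G}$ and $f_{\widetilde G}$ assigns the colours $[k]$ to $V_G$ (via $f_G$) and the colours $\{k+1,\dots,|V(H)|\}$ are only carried by $V_H'$, each colour class is hit exactly once; in particular $X_H$ consists of exactly one vertex of each colour in $f_H(V_H')$, i.e.\ $|X_H| = |V_H'| = |V_H| - k$, and $X_G$ consists of exactly one vertex of each colour in $[k]$, so $|X_G| = k$.

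First I would build the candidate isomorphism. For each $i \in [k]$ let $u_i \in U$ be the vertex with $f_H(u_i) = i$, and let $x_i \in X_G$ be the vertex with $f_G(x_i) = i$; for each colour $j \in f_H(V_H')$ let $w_j \in V_H'$ be the unique such vertex, which also lies in $X_H$. Define $\theta : V_H \to X$ by $\theta(u_i) = x_i$ and $\theta(w) = w$ for $w \in V_H'$. This is a bijection. Now I check how edges correspond:
\begin{itemize}
\item For $w, w' \in V_H'$: $ww' \in E_H \iff ww' \in E_1 \iff ww' \in E(\widetilde G)$, so these edges are reproduced exactly.
\item For $u_i \in U$ and $w \in V_H'$: by definition of $E_2$, $x_i w \in E_2 \iff \exists u \in U$ with $uw \in E_H$ and $f_H(u) = f_G(x_i) = i$, i.e.\ $u = u_i$; so $x_i w \in E(\widetilde G) \iff u_i w \in E_H$. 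Hence edges between $U$ and $V_H'$ are reproduced exactly.
\item For $u_i, u_j \in U$ with $i \neq j$: $\theta(u_i)\theta(u_j) = x_i x_j$, and $x_i x_j \in E(\widetilde G) \iff x_i x_j \in E_G$ (it cannot be in $E_1$ or $E_2$, both endpoints being in $V_G$). Meanwhile $u_i u_j \in E_H$ always, since $U$ induces a clique. So on the pair $\{u_i, u_j\}$, $H$ has an edge while $\widetilde G[X]$ has an edge precisely when $x_i x_j \in E_G$.
\end{itemize}

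Combining these, $\widetilde G[X]$ is obtained from $H$ (transported via $\theta$) by deleting exactly those edges $u_i u_j$ of $H[U]$ for which $x_i x_j \notin E_G$, i.e.\ for which $x_i x_j$ is a non-edge of $\widetilde G[X_G]$ (again because within $V_G$ the edge set of $\widetilde G$ restricted to $X_G$ is exactly $E_G$ restricted to $X_G$). No edges outside $H[U]$ are ever removed, and no edges are added. This gives both conclusions: $\widetilde G[X]$ is isomorphic to $H$ with a (possibly empty) set of $H[U]$-edges removed, and the number removed equals the number of non-edges of $\widetilde G[X \cap V_G]$. In particular, if $\widetilde G[X_G]$ happens to be a clique, no edges are deleted and $\widetilde G[X] \cong H$.

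The only real point requiring care — the main obstacle — is the bookkeeping around the colouring: one must be sure that a colourful $X$ really does pick out exactly one vertex per vertex of $U$ (so that the $u_i$ and $x_i$ are well-defined and $\theta$ is a bijection onto $X$), and that edges of $\widetilde G$ internal to $X_G$ are governed solely by $E_G$ and not inadvertently by $E_2$. Both follow from the definitions — $E_2$ joins $V_G$ to $V_H'$ only, and $f_{\widetilde G}$ restricted to $V_G$ is $f_G$ with image $[k]$ — but they should be stated explicitly. Everything else is the routine case check above.
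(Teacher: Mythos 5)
Your proof is correct and takes essentially the same route as the paper: the paper defines the bijection in the other direction, $\theta(x) = f_H^{-1}(f_{\widetilde{G}}(x))$, and verifies edge-preservation only for pairs not both mapping into $U$, then reads off the $U$-pair count at the end, whereas you write out all three cases explicitly, but the decomposition of $E(\widetilde G)$ into $E_G \cup E_1 \cup E_2$, the colour-matching isomorphism, and the observation that $X \cap V_H' = V_H'$ are all the same. No gaps.
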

\begin{proof}
We begin by defining a bijection $\theta$ from $X$ to $V_H$; we will then argue that in fact $\theta$ defines an isomorphism from $\widetilde{G}[X]$ to a graph $H'$, where either $H' = H$, or else $H'$ can be obtained from $H$ by deleting one or more edges from $H[U]$.  The mapping $\theta$ is defined as follows:
$$\theta(x) = f_H^{-1}(f_{\widetilde{G}}(x)),$$
so each vertex $x \in X$ is mapped to the vertex of $H$ that receives the same colour under $f_H$.  Note that this is well-defined as $f_H$ is a bijection; the fact that $X$ is colourful implies that $f_{\widetilde{G}}|_{X}$ is also bijective and hence that $\theta$ is a bijection.

In order to show that there exists some graph $H'$ which satisfies the conditions of the lemma and is such that $\theta$ defines an isomorphism from $\widetilde{G}[X]$ to $H'$, it suffices to check that, for any two vertices $x,y \in X$ such that at least one of $\theta(x)$ and $\theta(y)$ does not lie in $U$, we have $xy \in E(\widetilde{G})$ if and only if $\theta(x)\theta(y) \in E(H)$.

Suppose first that both $\theta(x)$ and $\theta(y)$ lie in $V_H \setminus U$.  Then, by definition of the colouring $f_{\widetilde{G}}$, we must have $x,y \in V_H'$, and moreover $\theta(x) = x$ and $\theta(y) = y$; thus it follows immediately from the construction that $xy \in E(\widetilde{G})$ if and only if $\theta(x)\theta(y) \in E(H)$.

Now suppose that $\theta(x) \in U$, but $\theta(y) \notin U$.  Then, as before, we see that $\theta(y) = y$.  By definition of $\widetilde{G}$, the edge $xy$ belongs to $E(\widetilde{G})$ if and only if there is some vertex $w \in U$ such that $wy \in E(H)$ and $f_H(w) = f_G(x)$.  However, it follows from the definitions of $\theta$ and $f_{\widetilde{G}}$ that $f_G(x) = f_{\widetilde{G}}(x) = f_H(\theta(x))$, so $xy \in E(\widetilde{G})$ if and only if there is a vertex $w \in U$ such that $wy \in E(H)$ and $f_H(w) = f_H(\theta(x))$.  Since $f_H$ is injective, this is only possible if in fact $\theta(x) = w$, in other words $xy \in E(\widetilde{G})$ if and only if $\theta(x)\theta(y) \in E(H)$, as required.

Thus we see that there is indeed some suitable graph $H'$ such that $\theta$ defines an isomorphism from $\widetilde{G}[X]$ to $H'$.  The fact that $\theta$ is an isomorphism from $\widetilde{G}[X]$ to $H'$ implies that, for all $x,y \in X$ such that $\theta(x),\theta(y) \in U$, we have $\theta(x)\theta(y) \in E(H')$ if and only if $xy \in E(\widetilde{G})$.  Since the vertices that map to $U$ under $\theta$ are precisely those in $X \cap V_G$, this implies that the number of edges in $H'[U]$ is equal to the number of edges in $\widetilde{G}[X \cap V_G]$; hence (as $H[U]$ is complete) the number of edges we must delete from $H$ to obtain $H'$ is precisely equal to the number of non-edges in $\widetilde{G}[X \cap V_G]$, as required.
\end{proof}

It is now straightforward to derive the analogous result in the second case, for $\constructind$.

\begin{lma}
Set $\widehat{G} = \constructind$, and let $X$ be a colourful subset of $\widehat{G}$ with respect to $f_{\widehat{G}}$.  Then the subgraph of $\widehat{G}$ induced by $X$ is isomorphic either to $H$ or to a graph obtainable from $H$ by adding one or more edges to $H[W]$.  Moreover, the number of edges added is equal to the number of edges in $\widehat{G} \cap V_G]$.
\label{add-W-edges}
\end{lma}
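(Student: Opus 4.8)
The plan is to deduce this lemma directly from Lemma~\ref{lose-U-edges} by passing to complements, exploiting the fact that the construction in the independent-set case was defined precisely so that $\widehat{G} = \overline{\constr(G,f_G,\overline{H},W)}$. Write $\widetilde{G} := \constr(G,f_G,\overline{H},W)$, so $\widehat{G} = \overline{\widetilde{G}}$. Since $W$ induces an independent set in $H$, it induces a clique in $\overline{H}$, so the pair $(\overline{H},W)$ satisfies the hypothesis of Lemma~\ref{lose-U-edges} with $W$ playing the role of $U$. By construction the colouring of $\widehat{G}$ coincides with that of $\widetilde{G}$, i.e.\ $f_{\widehat{G}} = f_{\widetilde{G}}$, so a subset $X$ is colourful with respect to $f_{\widehat{G}}$ if and only if it is colourful with respect to $f_{\widetilde{G}}$. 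It therefore suffices to take an $X$ colourful for $f_{\widetilde{G}}$, apply Lemma~\ref{lose-U-edges} to $\widetilde{G}$, and translate the conclusion through complementation.

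Concretely, I would first record the two elementary facts about complementation that make the translation work: (i) for any graph $K$ and any $Y \subseteq V(K)$ one has $\overline{K}[Y] = \overline{K[Y]}$, so that $\widehat{G}[X] = \overline{\widetilde{G}[X]}$ and $\widehat{G}[X \cap V_G] = \overline{\widetilde{G}[X \cap V_G]}$; and (ii) if a graph $K'$ is obtained from a graph $K$ on the same vertex set by deleting a set $D$ of edges, then $\overline{K'}$ is obtained from $\overline{K}$ by \emph{adding} the same set $D$ of edges, and a pair lying inside a vertex set $W$ before deletion still lies inside $W$ afterwards. Applying Lemma~\ref{lose-U-edges} to $\widetilde{G}$ then gives that $\widetilde{G}[X]$ is isomorphic either to $\overline{H}$, or to a graph obtained from $\overline{H}$ by deleting one or more edges lying inside $\overline{H}[W]$, and that the number of such edges deleted equals the number of non-edges of $\widetilde{G}[X \cap V_G]$.

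Taking complements finishes the argument. In the first case $\widehat{G}[X] \cong \overline{\overline{H}} = H$. In the second case, writing $H'$ for the graph obtained from $\overline{H}$ by deleting a set $D \subseteq E(\overline{H}[W])$ of edges, we get $\widehat{G}[X] \cong \overline{H'}$, which by fact (ii) is precisely the graph obtained from $\overline{\overline{H}} = H$ by adding the edges of $D$; and since each edge of $D$ lies inside $W$ and is a non-edge of $H$, this is indeed a graph obtained from $H$ by adding one or more edges to $H[W]$. Finally, the number of edges added is $|D|$, which by Lemma~\ref{lose-U-edges} equals the number of non-edges of $\widetilde{G}[X \cap V_G]$; by fact (i) this is exactly the number of edges of $\overline{\widetilde{G}[X \cap V_G]} = \widehat{G}[X \cap V_G]$, which is the stated count (the displayed expression in the lemma statement, where $\widehat{G}[X \cap V_G]$ is evidently intended).

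I do not anticipate a genuine obstacle: the whole content is the routine check that complementation commutes with taking induced subgraphs and interchanges edge-deletion with edge-addition while preserving the property "lies inside $W$", together with the bookkeeping that turns non-edges of $\widetilde{G}[X \cap V_G]$ into edges of $\widehat{G}[X \cap V_G]$. The one point worth a moment's care is verifying from the definition of the construction that $f_{\widehat{G}}$ is literally the same function as $f_{\widetilde{G}}$, so that the notion of "colourful" transfers unchanged; this is immediate from the way $f_{\constructind}$ was defined.
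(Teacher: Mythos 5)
Your proposal is correct and follows essentially the same route as the paper: apply Lemma~\ref{lose-U-edges} to $\widetilde{G}' = \constr(G,f_G,\overline{H},W)$, using that $W$ is a clique in $\overline{H}$, and then pass to complements. The paper compresses the final step to the single sentence ``the result follows immediately by taking complements,'' whereas you spell out the routine facts about complementation (compatibility with induced subgraphs, interchange of deletion and addition) and correctly read the garbled expression in the lemma statement as $\widehat{G}[X \cap V_G]$.
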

\begin{proof}
Suppose first that $X$ is a colourful subset of $\widetilde{G}' = \constr(G,f_G,\overline{H},W)$ under $f_{\widetilde{G}'}$.  It follows from Lemma \ref{lose-U-edges} that the subgraph of $\widetilde{G}'$ induced by $X$ is isomorphic either to $\overline{H}$ or to a graph obtainable from $\overline{H}$ by deleting one or more edge with both endpoints in $W$.  Moreover, in this case the number of edges deleted is equal to the number of non-edges in $\widetilde{G}'[X \cap V_G]$.  The result follows immediately by taking complements.
\end{proof}

We now use this pair of results to prove some further facts about our constructions.  The first is an easy corollary.  Recall that, if $H$ is a graph and $\mathcal{H}$ a collection of labelled graphs, then $\mathcal{H}^H = \{(H',\pi'): H' \cong H\}$.
\begin{cor}
Let $k,k' \in \mathbb{N}$ with $k \leq k'$, and let $\mathcal{H}_{k'}$ be a collection of labelled graphs on $k'$ vertices, with $(H,\pi) \in \mathcal{H}_{k'}$.  Set $\widetilde{G} = \construct$ and $\widehat{G} = \constructind$.  Then,
\begin{enumerate}
\item if $U \in V(H)^{(k)}$ induces a clique in $H$ and there is no $(H',\pi') \in \mathcal{H}_{k'}$ such that $H'$ can be obtained from $H$ by deleting one or more edges in $U$, then
$$\ColStrEmb(\mathcal{H}_{k'}, \widetilde{G}, f_{\widetilde{G}}) = \ColStrEmb(\mathcal{H}_{k'}^H,\widetilde{G}, f_{\widetilde{G}}).$$
\item if $W \in V(H)^{(k)}$ induces an independent set in $H$ and there is no $(H',\pi') \in \mathcal{H}_{k'}$ such that $H'$ can be obtained from $H$ by adding one or more edges in $W$, then
$$\ColStrEmb(\mathcal{H}_{k'}, \widehat{G}, f_{\widehat{G}}) = \ColStrEmb(\mathcal{H}_{k'}^H,\widehat{G}, f_{\widehat{G}}).$$
\end{enumerate}
\label{clique-iso}
\end{cor}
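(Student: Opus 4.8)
The plan is to read both parts directly off the structural lemmas already proved---Lemma~\ref{lose-U-edges} for part~1 and Lemma~\ref{add-W-edges} for part~2---so this is really a short corollary rather than a result requiring new ideas. I would prove part~1 in detail and then observe that part~2 is obtained by the identical argument with $\widehat{G}=\constructind$ in place of $\widetilde{G}=\construct$ and Lemma~\ref{add-W-edges} in place of Lemma~\ref{lose-U-edges}. In both parts one inequality is immediate: since $\mathcal{H}_{k'}^H\subseteq\mathcal{H}_{k'}$, any injective map $\theta$ that is counted using a witness from $\mathcal{H}_{k'}^H$ is also counted using that same witness viewed as an element of $\mathcal{H}_{k'}$, so $\ColStrEmb(\mathcal{H}_{k'}^H,\widetilde{G},f_{\widetilde{G}})\le\ColStrEmb(\mathcal{H}_{k'},\widetilde{G},f_{\widetilde{G}})$; all the content is in the reverse direction.

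For the reverse inequality in part~1, I would fix an injective $\theta\colon[k']\to V(\widetilde{G})$ counted by $\ColStrEmb(\mathcal{H}_{k'},\widetilde{G},f_{\widetilde{G}})$---so $\theta([k'])$ is colourful under $f_{\widetilde{G}}$ and there is some $(H',\pi')\in\mathcal{H}_{k'}$ with $\theta(i)\theta(j)\in E(\widetilde{G})\iff\pi'(i)\pi'(j)\in E(H')$ for all $i,j\in[k']$---and show the witness may be taken in $\mathcal{H}_{k'}^H$. The first step is the routine translation of the displayed condition: it says exactly that $\pi'(i)\mapsto\theta(i)$ is a graph isomorphism from $H'$ onto $\widetilde{G}[\theta([k'])]$, so $H'\cong\widetilde{G}[\theta([k'])]$. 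The second step applies Lemma~\ref{lose-U-edges} to the colourful set $X=\theta([k'])$, which gives that $\widetilde{G}[X]$, and hence $H'$, is isomorphic either to $H$ or to a graph obtained from $H$ by deleting one or more edges of $H[U]$. The hypothesis of part~1 rules out the latter possibility, so $H'\cong H$, i.e.\ $(H',\pi')\in\mathcal{H}_{k'}^H$; thus $\theta$ is also counted by $\ColStrEmb(\mathcal{H}_{k'}^H,\widetilde{G},f_{\widetilde{G}})$, and equality follows.

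Part~2 then goes through verbatim with $\widehat{G}=\constructind$, invoking Lemma~\ref{add-W-edges}: a colourful set $X$ now induces in $\widehat{G}$ a copy of $H$ or of a graph obtained from $H$ by \emph{adding} one or more edges inside $H[W]$, the latter is excluded by hypothesis, and again the witness graph must be isomorphic to $H$. I do not anticipate a genuine obstacle; the only thing needing a little care is the dictionary between the strong-embedding formulation---an injective labelled map together with a witness graph from $\mathcal{H}_{k'}$---and the ``induced subgraph up to isomorphism'' formulation in which the two lemmas are phrased, together with reading ``$H'$ can be obtained from $H$'' up to isomorphism; once that is pinned down nothing else intervenes.
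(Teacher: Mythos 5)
Your proof is correct and takes essentially the same approach as the paper: both arguments apply Lemma~\ref{lose-U-edges} (respectively Lemma~\ref{add-W-edges}) to the colourful image of an embedding, observe that the hypothesis rules out the ``deleted (resp.\ added) edges'' alternative, and conclude the witness graph must be isomorphic to $H$. Your version merely spells out the easy $\subseteq$ inequality and the dictionary between strong embeddings and induced-subgraph isomorphisms, which the paper's very terse proof leaves implicit.
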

\begin{proof}
We begin with the first part.  By definition, we know that the image of any mapping that contributes to $\ColStrEmb(\mathcal{H}_{k'},\widetilde{G},f_{\widetilde{G}})$ must be a colourful subset of $\widetilde{G}$ with respect to $f_{\widetilde{G}}$; but by Lemma \ref{lose-U-edges}, since no labelled graph in $\mathcal{H}_{k'}$ is isomorphic to a graph obtainable from $H$ by deleting one or more edges in $U$, any such subset must in fact be isomorphic to $H$.

The second part of the result follows by the same argument.
\end{proof}

The final fact we prove about our constructions is that the number of colourful subsets of $\construct$ (respectively $\constructind$) inducing copies of $H$ is equal to the number of colourful cliques in $G$.

\begin{lma}
Suppose that $U \in V(H)^{(k)}$ induces a clique.  Then, writing $\widetilde{G} = \construct$, 
$$\ColClique_k(G,f_G) = \ColSubInd(H,\widetilde{G},f_{\widetilde{G}}).$$
Similarly, if $W \in V(H)^{(k)}$ induces an independent set then, writing $\widehat{G} = \constructind$, 
$$\ColClique_k(G,f_G) = \ColSubInd(H,\widehat{G},f_{\widehat{G}}).$$
\label{count-cliques-stables}
\end{lma}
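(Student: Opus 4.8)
The plan is to prove the clique case by exhibiting an explicit bijection between the colourful $k$-subsets of $G$ (with respect to $f_G$) and the colourful subsets of $\widetilde{G} = \construct$ (with respect to $f_{\widetilde{G}}$), and to check that this bijection carries the cliques of $G$ exactly onto those colourful subsets of $\widetilde{G}$ that induce a copy of $H$. The independent-set case will then follow by complementation.

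First I would pin down the colour structure. Since $f_H$ is injective on the $k$-element set $U$ and $f_H(U) \subseteq [k]$, we in fact have $f_H(U) = [k]$ and hence $f_H(V_H') = \{k+1,\dots,|V_H|\}$. Consequently $f_{\widetilde{G}}$ takes values in $[k]$ precisely on $V_G$ and values in $\{k+1,\dots,|V_H|\}$ precisely on $V_H'$. It follows that any colourful subset $X$ of $\widetilde{G}$ must contain all of $V_H'$ (these are the only vertices carrying colours $k+1,\dots,|V_H|$, and each such colour occurs exactly once in $V_H'$), together with exactly one vertex of each colour in $[k]$, all of these drawn from $V_G$. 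Hence $X \mapsto X \cap V_G$ is a bijection from the colourful subsets of $\widetilde{G}$ onto the colourful $k$-subsets $S$ of $G$, with inverse $S \mapsto S \cup V_H'$.

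Next I would invoke Lemma~\ref{lose-U-edges}: for a colourful $X$, the graph $\widetilde{G}[X]$ is $H$ with $d$ edges of $H[U]$ deleted, where $d$ is the number of non-edges of $\widetilde{G}[X \cap V_G]$. Deleting $d \geq 1$ edges strictly reduces the edge count, so $\widetilde{G}[X] \cong H$ if and only if $d = 0$, i.e.\ if and only if $X \cap V_G$ induces a clique in $\widetilde{G}$; and since $\widetilde{G}[V_G] = G$, this says precisely that $S = X \cap V_G$ is a clique in $G$. Combining this with the bijection of the previous step yields $\ColSubInd(H,\widetilde{G},f_{\widetilde{G}}) = \ColClique_k(G,f_G)$.

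For the independent-set case I would not repeat the argument but instead apply the clique case to the triple $G$, $\overline{H}$, $W$ (noting that $W$ induces a clique in $\overline{H}$), obtaining $\ColClique_k(G,f_G) = \ColSubInd(\overline{H},\widetilde{G}',f_{\widetilde{G}'})$ where $\widetilde{G}' = \constr(G,f_G,\overline{H},W)$, and then take complements: $\widehat{G} = \overline{\widetilde{G}'}$ and $f_{\widehat{G}} = f_{\widetilde{G}'}$, while $\widetilde{G}'[X] \cong \overline{H} \iff \widehat{G}[X] \cong H$, so the colourful subsets counted on the two sides coincide. The only genuinely delicate point in the whole argument is the first step — verifying that colourfulness forces $V_H' \subseteq X$, so that the passage to $k$-subsets of $G$ really is a bijection; once that bookkeeping is in place, everything else follows immediately from Lemma~\ref{lose-U-edges} (respectively Lemma~\ref{add-W-edges}) together with complementation.
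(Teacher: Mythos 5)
Your proof is correct and follows essentially the same route as the paper's: both rely on Lemma~\ref{lose-U-edges}, use the observation that any colourful subset of $\widetilde{G}$ must contain all of $V_H'$, deduce that $\widetilde{G}[X] \cong H$ iff $X \cap V_G$ is a clique of $G$, and handle the independent-set case by complementation. The only difference is presentational — you set up the bijection $X \mapsto X \cap V_G$ explicitly at the outset, whereas the paper verifies the two inclusions separately, but the underlying argument is the same.
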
 
\begin{proof}
We begin by showing that every colourful subset in $\widetilde{G}$ that induces a copy of $H$ corresponds to a distinct colourful clique in $G$.  Observe that, by Lemma \ref{lose-U-edges}, any colourful subset $X$ of $\widetilde{G}$ must induce a graph $H'$ that is either isomorphic to $H$ or else is obtainable from $H$ by deleting some edges in $H[U]$.  Moreover, the number of non-edges of $\widetilde{G}[X \cap V_G]$ is equal to the number of edges that must be deleted from $H$ to obtain $H'$.  Thus, if $X$ in fact induces a copy of $H$, then there cannot be any non-edges in $\widetilde{G}[X \cap V_G]$; in other words, $\widetilde{G}[X \cap V_G]$ is a clique.  By definition of $\widetilde{G}$, this means that $X \cap V_G$ induces a clique in $G$.  Note that, as $X$ is a colourful subset of $\widetilde{G}$ and colours from $[k]$ only appear at vertices from $V_G$ under $f_{\widetilde{G}}$, $\widetilde{G}[X \cap V_G]$ must in fact be a colourful clique with respect to the colouring $f_G$ (as $f_{\widetilde{G}}$ agrees with $f_G$ on $V_G$).  Now, observe that all colourful subsets $X$ must contain every vertex in $V_H'$, and so distinct colourful subsets $X$ and $X'$ must have distinct intersections with $V_G$.  Thus every colourful subset of $\widetilde{G}$ that induces a copy of $H$ corresponds to a distinct colourful clique in $G$.

Now we show that every colourful clique in $G$ corresponds to a distinct colourful subset in $\widetilde{G}$ that induces a copy of $H$.  Suppose that $Y$ induces a colourful clique in $G$ (with respect to the colouring $f_G$).  Observe that the set $Y \cup V_H'$ is colourful under $f_{\widetilde{G}}$, so by Lemma \ref{lose-U-edges} we know that $Y \cup V_H'$ induces a graph $H'$ that is either isomorphic to $H$ or to a subgraph of $H$ obtained by deleting one or more edges from $H[U]$.  Moreover, we know that the number of edges we must delete from $H$ to obtain $H'$ is equal to the number of non-edges in $\widetilde{G}[(Y \cup V_H') \cap V_G] = \widetilde{G}[Y]$.  Since $Y$ induces a clique in $G$, there are no non-edges in $\widetilde{G}[Y]$, and it must be that in fact $Y \cup V_H'$ induces a copy of $H$ in $\widetilde{G}$.  Finally, it is clear that distinct colourful cliques in $G$ will give distinct colourful copies of $H$.

The second part of the result, for $\constructind$, now follows easily by taking complements.
\end{proof}

\section{Hardness results}
\label{hard}

In this section we prove our results about the hardness of certain classes of parameterised subgraph counting problems.  We begin in Section \ref{aux} with some auxiliary results, then in Section \ref{bounded} we consider the case in which the property holds for a decreasing proportion of the possible edge densities, before giving a stronger result in Section \ref{intervals} for the special case in which the property depends only on the number of edges present in a subgraph.

\subsection{Auxiliary results}
\label{aux}

We prove two key lemmas which will be used throughout the rest of this section.  We begin by relating the number of subsets that induce a copy of a graph $H$ to the number of strong embeddings of graphs from a class of labelled graphs all isomorphic to $H$.

\begin{lma}
Let $\mathcal{H}$ be a collection of labelled graphs, and $(H,\pi) \in \mathcal{H}$ a labelled $k$-vertex graph.  Set 
\begin{align*}
\alpha_H = |\{\sigma : \quad & \sigma \text{ a permutation on $[k]$, $\exists (H,\pi') \in \mathcal{H}^H$ such that} \\
				    		 &  \text{$\pi' \circ \sigma^{-1} \circ \pi^{-1}$ defines an automorphism on $H$}\}|.
\end{align*}
Then, for any graph $G$, 
$$\StrEmb(\mathcal{H}^H,G) = \alpha_H \cdot \SubInd(H,G),$$
Moreover, if $G$ is equipped with a $k$-colouring $f$, then
$$\ColStrEmb(\mathcal{H}^H,G,f) = \alpha_H \cdot \ColSubInd(H,G,f).$$
\label{emb->subg}
\end{lma}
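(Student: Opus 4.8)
The plan is to set up a bijective/counting argument that groups the strong embeddings of labelled graphs in $\mathcal{H}^H$ according to the vertex set of their image. Fix a graph $G$ (possibly with colouring $f$). Every $\theta:[k]\to V(G)$ counted by $\StrEmb(\mathcal{H}^H,G)$ is injective, its image $U=\theta([k])$ has size $k$, and the induced graph $G[U]$ is isomorphic to $H$ (since every member of $\mathcal{H}^H$ is isomorphic to $H$, and $\theta$ witnesses $G[U]\cong H'$ for some $H'\cong H$). So the image-set map sends the embeddings counted by $\StrEmb(\mathcal{H}^H,G)$ into exactly the collection of $U\in V(G)^{(k)}$ with $G[U]\cong H$, i.e.\ the $\SubInd(H,G)$ sets. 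First I would argue that every fibre of this map has the same size, namely $\alpha_H$, so that the total count is $\alpha_H\cdot\SubInd(H,G)$.

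To compute the size of a fibre, fix one such $U$ with $G[U]\cong H$, and fix any one bijection $\rho:[k]\to U$ realising an isomorphism $G[U]\cong H$ via $\pi$ — concretely, $\rho$ such that $\rho(i)\rho(j)\in E(G)\iff \pi(i)\pi(j)\in E(H)$; such a $\rho$ exists because $G[U]\cong H$. Then the embeddings $\theta$ with image $U$ are exactly the maps $\theta=\rho\circ\tau$ for $\tau$ a permutation of $[k]$ such that there exists $(H,\pi')\in\mathcal{H}^H$ with $\theta(i)\theta(j)\in E(G)\iff\pi'(i)\pi'(j)\in E(H)$. Unwinding the adjacency condition using the defining property of $\rho$, this says $\pi(\tau(i))\pi(\tau(j))\in E(H)\iff \pi'(i)\pi'(j)\in E(H)$, i.e.\ the map $\pi'\circ\tau^{-1}\circ\pi^{-1}$ is an automorphism of $H$. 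Comparing with the definition of $\alpha_H$ (reading the permutation there as $\sigma=\tau$), the number of admissible $\tau$ is precisely $\alpha_H$, independent of the chosen $U$ and of $\rho$. Hence $\StrEmb(\mathcal{H}^H,G)=\alpha_H\cdot\SubInd(H,G)$.

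For the coloured statement, I would run the identical argument but restrict attention to sets $U$ that are colourful under $f$: an embedding $\theta$ is counted by $\ColStrEmb(\mathcal{H}^H,G,f)$ iff it is counted by $\StrEmb(\mathcal{H}^H,G)$ \emph{and} its image $\theta([k])$ is colourful. Since colourfulness of the image is a property of $U=\theta([k])$ alone, the image-set map now lands in the collection of colourful $U$ with $G[U]\cong H$, whose size is $\ColSubInd(H,G,f)$, and each fibre still has size $\alpha_H$ by exactly the computation above (the colour condition does not interact with the choice of $\tau$). This gives $\ColStrEmb(\mathcal{H}^H,G,f)=\alpha_H\cdot\ColSubInd(H,G,f)$.

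The one subtlety — and the step I expect to need the most care — is checking that the quantity counted in the definition of $\alpha_H$ (permutations $\sigma$ for which \emph{some} $(H,\pi')\in\mathcal{H}^H$ makes $\pi'\circ\sigma^{-1}\circ\pi^{-1}$ an automorphism) coincides, for a \emph{fixed} reference $\rho$ and a \emph{fixed} image $U$, with the set of $\tau$ yielding a valid embedding, and that this is genuinely independent of $U$ and $\rho$. The key point is that replacing $\rho$ by another isomorphism-realising $\rho'$ changes $\tau$ by precomposition with a fixed automorphism of $H$ (transported through $\pi$), which merely permutes the set of admissible $\tau$ among themselves; and the existential quantifier over $\pi'\in\mathcal{H}^H$ in the definition of $\alpha_H$ exactly matches the existential ``$\exists(H,\pi')\in\mathcal{H}$'' in the definitions of $\StrEmb(\mathcal{H}^H,G)$ and $\ColStrEmb(\mathcal{H}^H,G,f)$. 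Once that correspondence is laid out cleanly, both identities drop out.
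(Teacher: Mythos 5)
Your proof is correct and follows essentially the same route as the paper's: fix a reference ordering/bijection $\rho$ of each $k$-set $U$ with $G[U]\cong H$ that realises $(H,\pi)$, parametrise the embeddings with image $U$ by permutations $\tau$ of $[k]$, and translate the adjacency condition into the statement that $\pi'\circ\tau^{-1}\circ\pi^{-1}$ is an automorphism of $H$, giving exactly $\alpha_H$ per fibre. The closing discussion of independence from the choice of $\rho$ is superfluous (one just fixes a single $\rho$, as the paper does with its ordering of $X$), but it does no harm.
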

\begin{proof}
Observe first that $k$-tuples whose elements form the set $X \in V^{(k)}$ can only contribute to the quantity $\StrEmb(\mathcal{H}^H,G)$ if in fact $G[X] \cong H$.  We will argue that each subset that induces a copy of $H$ gives rise to exactly $\alpha_H$ tuples that contribute to $\StrEmb(\mathcal{H}^H,G)$.

Suppose that $X$ is such a subset; without loss of generality we may write $X = \{x_1,\ldots,x_k\}$ where the vertices are ordered so that
\begin{equation}
G[x_1,\ldots,x_k] = (H,\pi).
\label{pi-labelling}
\end{equation}

It is clear that there is a one-to-one correspondence between $k$-tuples whose elements form the set $X$ and permutations of $[k]$: we may regard the permutation $\sigma$ as corresponding to the tuple $(x_{\sigma(1)},\ldots,x_{\sigma(k)})$.  Observe that the tuple $(x_{\sigma(1)},\ldots,x_{\sigma(k)})$ will contribute to the value of $\StrEmb(\mathcal{H}^H,G)$ if and only if there exists some bijection $\pi':[k] \rightarrow V(H)$ such that $(H,\pi') \in \mathcal{H}^H$ and 
\begin{equation}
G[x_{\sigma(1)},\ldots,x_{\sigma(k)}] = (H,\pi').
\label{count-tuple}
\end{equation}
So the tuple contributes if and only if, for every $i,j \in [k]$, we have
$$x_{\sigma(i)}x_{\sigma(j)} \in E(G) \iff \pi'(i)\pi'(j) \in E(H).$$
Note that, by \eqref{pi-labelling}, for any $i,j \in [k]$ we have $x_{\sigma(i)}x_{\sigma(j)} \in E(G)$ if and only if $\pi(\sigma(i))\pi(\sigma(j)) \in E(H)$.  Thus \eqref{count-tuple} is equivalent to the statement that, for all $i,j \in [k]$,
$$ \pi(\sigma(i))\pi(\sigma(j)) \in E(H) \iff \pi'(i)\pi'(j) \in E(H),$$
which holds if and only if $\pi' \circ \sigma^{-1} \circ \pi^{-1}$ defines an automorphism on $H$.  Hence the number of $k$-tuples drawn from $X$ that contribute to the value of $\StrEmb(\mathcal{H}^H,G)$ is exactly $\alpha_H$.  Distinct subsets inducing $H$ will give rise to disjoint sets of $k$-tuples, so we see that in fact 
$$\StrEmb(\mathcal{H}^H,G) = \alpha_H \cdot \SubInd(H,G),$$
as required.

Exactly the same reasoning can be applied if we restrict to subsets $U$ that are colourful, which gives the second part of the result.
\end{proof}

Next we exploit the properties of our construction demonstrated in the previous section to give a sufficient condition for a parameterised subgraph-counting problem to be \#W[1]-complete.  We say that $(H,\pi) \in \mathcal{H}_{\phi_k}$ is \emph{good for $k'$-cliques} if there exists $U \in V(H)^{(k')}$ that induces a clique and there is no $(H',\pi') \in \mathcal{H}_{\phi_k}$ such that $H'$ can be obtained from $H$ by deleting edges with both endpoints in $U$.  Correspondingly, we say that $(H,\pi) \in \mathcal{H}_{\phi_k}$ is \emph{good for $k'$-independent sets} if there exists  $W \in V(H)^{(k')}$ that induces an independent set and there is no $(H',\pi') \in \mathcal{H}_{\phi_k}$ such that $H'$ can be obtained from $H$ by adding edges with both endpoints in $W$.

\begin{lma}
Let $\Phi$ be a family $(\phi_1,\phi_2,\ldots)$ of functions $\phi_k: \mathcal{L}(k) \rightarrow \{0,1\}$, such that the function mapping $k \mapsto \phi_k$ is computable.  Suppose there exists a computable function $g$ such that, for each $k' \in \mathbb{N}$, there exists $k \in \mathbb{N}$ with $k' \leq k \leq g(k')$ and $(H,\pi) \in \mathcal{H}_{\phi_k}$ that is either good for $k'$-cliques or is good for $k'$-independent sets.  Then \paramcount{\genprob}$(\Phi)$ is \#W[1]-complete under fpt Turing reductions.
\label{hardness-condition}
\end{lma}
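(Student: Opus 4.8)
\emph{Proof plan.}
Membership of \paramcount{\genprob}$(\Phi)$ in \#W[1] is exactly Proposition~\ref{in-W}, so the substance is \#W[1]-hardness, and for this the plan is to exhibit an fpt Turing reduction from \textsc{Multicolour Clique}. Given an instance $(G,f_G)$ of \textsc{Multicolour Clique} with colour set $[k']$, the task is to compute $\ColClique_{k'}(G,f_G)$ using an oracle for \paramcount{\genprob}$(\Phi)$, subject to all oracle queries having parameter bounded by a computable function of $k'$.

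The first step is a preprocessing search for a good witness. Because $k\mapsto\phi_k$ and $g$ are computable and there are only finitely many labelled graphs on at most $g(k')$ vertices, we can, in time depending on $k'$ alone, run over all $k$ with $k'\le k\le g(k')$, compute $\phi_k$, and for each $(H,\pi)\in\mathcal{H}_{\phi_k}$ decide whether it is good for $k'$-cliques or good for $k'$-independent sets --- in each case this amounts to a finite search over the $k'$-subsets of $V(H)$ together with the finitely many members of $\mathcal{H}_{\phi_k}$. The hypothesis of the lemma guarantees this search succeeds, so fix the resulting $k$ and $(H,\pi)$. The argument now splits into two symmetric cases according to whether $(H,\pi)$ is good for $k'$-cliques or for $k'$-independent sets; I describe the former, the latter being identical with $\constructind$ replacing $\construct$, part~(2) of Corollary~\ref{clique-iso} replacing part~(1), and the independent-set half of Lemma~\ref{count-cliques-stables} replacing the clique half.

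Assume then that $(H,\pi)$ is good for $k'$-cliques, witnessed by $U\in V(H)^{(k')}$ inducing a clique in $H$. In polynomial time build $\widetilde{G}=\construct$ and its $k$-colouring $f_{\widetilde{G}}$, and chain the results of Section~\ref{const-props}: by Lemma~\ref{count-cliques-stables}, $\ColClique_{k'}(G,f_G)=\ColSubInd(H,\widetilde{G},f_{\widetilde{G}})$; by Lemma~\ref{emb->subg}, $\ColStrEmb(\mathcal{H}_{\phi_k}^{H},\widetilde{G},f_{\widetilde{G}})=\alpha_H\cdot\ColSubInd(H,\widetilde{G},f_{\widetilde{G}})$ where $\alpha_H$ is a computable positive integer (it is nonzero because $\sigma=\mathrm{id}$ together with $\pi'=\pi$ makes $\pi'\circ\sigma^{-1}\circ\pi^{-1}$ the identity automorphism of $H$); and by Corollary~\ref{clique-iso}(1) (with the parameter names in that statement suitably relabelled, since there the collection consists of graphs on $k'$ vertices with a clique on $k\le k'$ vertices, whereas here $\mathcal{H}_{\phi_k}$ consists of graphs on $k$ vertices with a clique on $k'\le k$ vertices), goodness of $(H,\pi)$ for $k'$-cliques gives $\ColStrEmb(\mathcal{H}_{\phi_k},\widetilde{G},f_{\widetilde{G}})=\ColStrEmb(\mathcal{H}_{\phi_k}^{H},\widetilde{G},f_{\widetilde{G}})$. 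Combining these equalities, $\ColClique_{k'}(G,f_G)=\ColStrEmb(\mathcal{H}_{\phi_k},\widetilde{G},f_{\widetilde{G}})/\alpha_H$. But $\ColStrEmb(\mathcal{H}_{\phi_k},\widetilde{G},f_{\widetilde{G}})$ is precisely the value that \paramcount{\genprobcol}$(\Phi)$ returns on input $(\widetilde{G},k,f_{\widetilde{G}})$, and Lemma~\ref{uncol-col} computes that value by an fpt Turing reduction to \paramcount{\genprob}$(\Phi)$ in which every query has parameter $k\le g(k')$. Composing all of this yields the desired reduction, and \#W[1]-hardness of \textsc{Multicolour Clique} finishes the proof.

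What has to be checked is essentially just that the composed procedure is genuinely fpt: the witness search costs $f(k')$ with no dependence on $|G|$; the graph $\widetilde{G}$ has $|V(G)|+|V(H)|-k'\le|V(G)|+g(k')$ vertices and is constructed in polynomial time; the arithmetic (a single division by $\alpha_H$) is trivial; and every query ultimately passed to the \paramcount{\genprob}$(\Phi)$ oracle, including those introduced inside the reduction of Lemma~\ref{uncol-col}, has parameter at most $g(k')$. The only mild obstacle is this bookkeeping of time and parameter bounds down the chain of reductions (and carrying the independent-set case along in parallel); all the genuine combinatorial content already resides in the construction lemmas of Section~\ref{const-props} and in Lemma~\ref{emb->subg}.
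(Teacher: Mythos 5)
Your proposal is correct and follows essentially the same route as the paper's own proof: reduce from \textsc{Multicolour Clique}, search exhaustively for a good $(H,\pi)$ with $k\le g(k')$, build $\construct$ (or $\constructind$), chain Corollary~\ref{clique-iso}, Lemma~\ref{emb->subg} and Lemma~\ref{count-cliques-stables} to express $\ColClique_{k'}(G,f_G)$ as $\ColStrEmb(\mathcal{H}_{\phi_k},\widetilde{G},f_{\widetilde{G}})/\alpha_H$, and finish with Lemma~\ref{uncol-col}. Your added remarks (that $\alpha_H\ge 1$ via $\sigma=\mathrm{id}$, $\pi'=\pi$, and that the roles of $k,k'$ in Corollary~\ref{clique-iso} must be swapped) are correct bookkeeping points that the paper leaves implicit.
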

\begin{proof}
We prove this result by means of an fpt Turing reduction from \paramcount{Multicolour Clique}.  Recall from Lemma \ref{uncol-col} that, for any $\Phi$, we have \paramcount{\genprobcol}($\Phi$) \leqfptT \paramcount{\genprob}($\Phi$), so it suffices to prove that \paramcount{MulticolourClique} \leqfptT \paramcount{\genprobcol}($\Phi$).

Suppose $(G,f_G)$ is an instance of \paramcount{MulticolourClique}, where $G$ is a graph and $f_G$ is a $k'$-colouring of the vertices of $G$.  By assumption, we can fix $k \in \mathbb{N}$ with $k' \leq k \leq g(k')$ such that some $(H,\pi) \in \mathcal{H}_{\phi_k}$ is good for either $k'$-cliques or $k'$-independent sets.  Recall also the definition of $\alpha_H$ from Lemma \ref{emb->subg}:
\begin{align*}
\alpha_H = |\{\sigma : \quad & \sigma \text{ a permutation on $[k]$, $\exists (H,\pi') \in \mathcal{H}^H$ such that} \\
				    		 &  \text{$\pi' \circ \sigma^{-1} \circ \pi^{-1}$ defines an automorphism on $H$}\}|.
\end{align*}
Now, if $(H,\pi)$ is good for $k'$-cliques (so some $U \in V(H)^{(k')}$ induces a clique, and there is no $(H',\pi') \in \mathcal{H}_{\phi_k}$ such that $H'$ can be obtained from $H$ by deleting edges with both endpoints in $U$), we observe that, setting $\widetilde{G} = \construct$,
\begin{align*}
\ColStrEmb(\mathcal{H}_{\phi_k},\widetilde{G},f_{\widetilde{G}})
											  & = \ColStrEmb(\mathcal{H}_{\phi_k}^H,\widetilde{G},f_{\widetilde{G}})
											   & \mbox{by Corollary \ref{clique-iso}} \\
											  & = \alpha_H \cdot \ColSubInd(H,\widetilde{G},f_{\widetilde{G}}) 
											   & \mbox{by Lemma \ref{emb->subg}} \\
											  & = \alpha_H \cdot \ColClique_{k'}(G,f_G) 
											   & \mbox{by Lemma \ref{count-cliques-stables}.}
\end{align*}
If instead $(H,\pi)$ is good for $k'$-independent sets, a symmetric argument implies that 
$$\ColStrEmb(\mathcal{H}_{\phi_k},\widehat{G},f_{\widehat{G}})  = \alpha_H \cdot \ColClique_{k'}(G,f_G),$$
where $\widehat{G} = \constructind$.
Thus, to compute the number of colourful cliques in $G$ under the colouring $f_G$, it suffices to perform the following steps.
\begin{enumerate}
\item Identify a suitable value of $k$ and a labelled graph $(H,\pi) \in \mathcal{H}_{\phi_k}$: this can be done by exhaustive search in time bounded only by some computable function of $k'$, as we know there exists a suitable $(H,\pi) \in \mathcal{H}_{\phi_k}$ for some $k \leq g(k')$.
\item Construct $\construct$, or $\constructind$, as appropriate: this can clearly be done in time polynomial in $n$ and $k \leq g(k')$.
\item Compute $\alpha_H$: this depends only on the graph $H$, so can be done in time bounded by some computable function of $k'$.
\item Perform a single oracle call to \paramcount{\genprobcol}$(\Phi)$: note that the parameter value is at most $g(k')$.
\end{enumerate}
This therefore gives an fpt Turing reduction from \paramcount{Multicolour Clique} to \paramcount{\genprobcol}$(\Phi)$, as required.
\end{proof}

\subsection{Properties that hold only for a decreasing proportion of the possible edge densities}
\label{bounded}

Suppose that we fix a family $\Phi = (\phi_1,\phi_2,\ldots)$ of functions $\phi_k: \mathcal{L}(k) \rightarrow \{0,1\}$, such that the function mapping $k \mapsto \phi_k$ is computable.  For each $k \in \mathbb{N}$, let $D_k = \{|E(H)|: (H,\pi) \in \mathcal{H}_{\phi_k}\}$, so $|D_k|$ is informally the number of distinct edge densities for which the property holds.  We will show that if $|D_k| = o(k^2)$ then \paramcount{\genprob}$(\Phi)$ is \#W[1]-complete.  

We need one auxiliary result before giving our main hardness result.

\begin{lma}
Let $\mathcal{H}_{k}$ be a non-empty collection of labelled graphs on $k$ vertices, where $k \geq 2^{2k'}$.  Suppose that 
$$r = |\{d: \exists (H,\pi) \in \mathcal{H}_k \text{ such that } |E(H)| = d\}|$$ 
satisfies 
\begin{equation}
r \leq \frac{1}{\binom{k-2}{k'-2}\binom{k'}{2}} \left( \frac{(2^{2k'}-k')!}{(2^{2k'})!} \frac{k!}{(k-k')!} \right).
\label{r-cond}
\end{equation}
Then there exists $(H,\pi) \in \mathcal{H}_k$ that is either good for $k'$-cliques or good for $k'$-independent sets. 
\label{bounded-layers}
\end{lma}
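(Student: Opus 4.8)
The plan is to argue by contradiction: suppose that \emph{no} $(H,\pi)\in\mathcal{H}_k$ is good for $k'$-cliques or good for $k'$-independent sets, and derive a violation of \eqref{r-cond}. Two degenerate cases are disposed of first. If $k'\le 1$ there are no edges to add or delete, so every member of $\mathcal{H}_k$ is trivially good (and \eqref{r-cond} should be read with $k'\ge 2$). If $r=1$, every member of $\mathcal{H}_k$ has the same number of edges; picking any $(H,\pi)\in\mathcal{H}_k$, Theorem \ref{ramsey} together with $k\ge 2^{2k'}$ gives a $k'$-clique or a $k'$-independent set in $H$, and deleting an edge from the former, or adding an edge to the latter, must leave $\mathcal{H}_k$, so $(H,\pi)$ is good. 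So assume $k'\ge 2$ and $r\ge 2$.

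Next I would isolate a convenient extremal graph. Let $(H^+,\pi^+)\in\mathcal{H}_k$ have the maximum number of edges. If $H^+$ had an independent $k'$-set $W$, then adding any edge inside $W$ would yield a graph with more edges than any member of $\mathcal{H}_k$, so $(H^+,\pi^+)$ would be good for $k'$-independent sets, a contradiction. Hence $H^+$ has no independent $k'$-set, so applying Corollary \ref{ramsey-cor} to $H^+$ with its ``$k$'' taken to be $k'$ (legitimate since $k\ge 2^{2k'}$), $H^+$ has at least
$$ N \;:=\; \frac{(2^{2k'}-k')!}{(2^{2k'})!}\cdot\frac{k!}{(k-k')!} $$
subsets of size $k'$ inducing a clique. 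Since $(H^+,\pi^+)$ is not good for $k'$-cliques, each such $U$ admits a nonempty $F_U\subseteq E(H^+[U])$ with $H^+-F_U$ isomorphic to (the graph underlying) some member of $\mathcal{H}_k$; thus $|E(H^+)|-|F_U|\in D_k$, this value is strictly less than $|E(H^+)|=\max D_k$, and $1\le|F_U|\le\binom{k'}{2}$.

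The core of the proof is to bound the number of clique-inducing $k'$-subsets of $H^+$ by $(r-1)\binom{k'}{2}\binom{k-2}{k'-2}$, which together with \eqref{r-cond} contradicts the lower bound $N$ (recall $r\ge 2$). The basic tool is a ``killing'' observation: if for some clique-inducing $U$ the graph $H^+-F_U$ contains \emph{no} $k'$-clique, then every clique-inducing $k'$-subset of $H^+$ must contain both endpoints of some edge of $F_U$ (otherwise that clique would survive the deletion); since $|F_U|\le\binom{k'}{2}$ and each pair of vertices lies in at most $\binom{k-2}{k'-2}$ subsets of size $k'$, this already forces at most $\binom{k'}{2}\binom{k-2}{k'-2}<N$ clique-inducing subsets, a contradiction. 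In the remaining case, where every witnessing deletion leaves a $k'$-clique, I would classify the clique-inducing subsets $U$ according to the value $|E(H^+)|-|F_U|\in D_k$ (at most $r-1$ possibilities) and, for each class, bound the number of its members by $\binom{k'}{2}\binom{k-2}{k'-2}$ — again by exhibiting a bounded collection of vertex-pairs through which all members of the class must pass — then sum over the $\le r-1$ classes. The independent-set side is entirely parallel (via the complement construction), using the member of $\mathcal{H}_k$ with \emph{fewest} edges and interchanging ``clique''/``delete edge'' with ``independent set''/``add edge''.

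I expect the genuinely hard step to be the last one: bounding the size of each edge-count class when the witnessing deletions never destroy all $k'$-cliques. This is where the extremal choice of $H^+$ must really be exploited — possibly by iterating, replacing $H^+$ by $H^+-F_U$ and using that the edge count strictly decreases to guarantee the process terminates — and where all three factors $\binom{k-2}{k'-2}$, $\binom{k'}{2}$ and $r$ of \eqref{r-cond} get consumed. The contradiction set-up, the use of Corollary \ref{ramsey-cor}, and the killing observation are routine by comparison.
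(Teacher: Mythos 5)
The proposal has a genuine gap, and you have essentially identified it yourself. The building blocks you assemble are the right ones — Corollary~\ref{ramsey-cor} applied to an edge-extremal member of $\mathcal{H}_k$, the trivial disposal of $r=1$, and the ``killing observation'' that any $k'$-clique of $H$ destroyed by deleting a set $F$ of edges inside a $k'$-set $U$ must meet $U$ in at least two vertices, so at most $\binom{k'}{2}\binom{k-2}{k'-2}$ cliques can be lost per deletion. But the proposal never closes the argument in the main case, where every witnessing deletion $H^+ - F_U$ still has $k'$-cliques. Classifying the clique-inducing subsets $U$ of $H^+$ by the value $|E(H^+)|-|F_U|$ and trying to bound each class directly does not go through: the number of $k'$-cliques of $H^+$ that survive into $H^+-F_U$ is unconstrained by $F_U$, so you cannot bound the cliques of $H^+$ by ``those that pass through $F_U$'' unless you already know $H^+-F_U$ has few cliques — which is exactly the circularity you are fighting, and which your ``iterate to termination'' suggestion does not resolve (naively iterating gives no control on how many cliques you accumulate along the way, since each step can reintroduce as many cliques as it destroys).

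The paper's proof avoids this circularity with a different extremal choice. It considers, for each edge-count $d\in D_k$, the maximum number of $k'$-cliques among members of $\mathcal{H}_k$ with $d$ edges, and collects these into a set $C$ of at most $r$ integers. Handling the edge-minimal and edge-maximal cases first forces $0\in C$ and $\max C \ge r\binom{k-2}{k'-2}\binom{k'}{2}$, so (pigeonhole on $\le r-1$ gaps) some consecutive pair in $C$ differs by more than $M:=\binom{k-2}{k'-2}\binom{k'}{2}$; call the lower endpoint $s$. Every member of $\mathcal{H}_k$ then either has at most $s$ cliques or shares its edge count with a member having more than $s+M$ cliques. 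The graph to use is not $H^+$ but the member $H$ of smallest edge count among those with more than $s+M$ cliques: any $H'$ reachable from $H$ by deleting edges inside a $k'$-clique $U$ has strictly fewer edges, hence (by minimality of $H$ and the gap) at most $s$ cliques, forcing more than $M$ cliques of $H$ to pass through $U$, a contradiction. This ``gap in $C$ plus minimum-edge many-clique graph'' device is the step your sketch is missing, and it is where all three factors of \eqref{r-cond} get consumed at once rather than being distributed over $r-1$ classes.
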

\begin{proof}
First recall from the corollary to Ramsey's Theorem (Corollary \ref{ramsey-cor}) that any graph on $k$ vertices must contain at least $\frac{(2^{2k'}-k')!}{(2^{2k'})!} \frac{k!}{(k-k')!}$ subsets of $k'$ vertices, where each of these subsets induces either a clique or an independent set.  It therefore follows immediately from \eqref{r-cond} that any graph on $k$ vertices must contain at least 
$$r \binom{k-2}{k'-2} \binom{k'}{2}$$
such $k'$-vertex subsets.

Now, for each $(H,\pi) \in \mathcal{H}_k$, let $\clique(H)$ denote the number of $k'$-cliques in $H$, and let 
$$\theta_{\mathcal{H}_k}(H,\pi) = \max_{\substack{(H',\pi') \in \mathcal{H}_k \\ |E(H')| = |E(H)|}} \clique(H').$$
We also set
$$C = \{\theta_{\mathcal{H}_k}(H,\pi) : (H,\pi) \in \mathcal{H}_k\}.$$

Observe that, if there is some $(H,\pi) \in \mathcal{H}_k$ such that $H$ contains at least one $k'$-clique $U$ and, for all $(H',\pi') \in \mathcal{H}_k$, we have $|E(H')| \geq |E(H)|$, then it is clear that $(H,\pi)$ would be good for $k'$-cliques (since, by edge-minimality, no labelled graph in $\mathcal{H}_k$ is isomorphic to any graph obtainable by deleting edges from $H$).  Thus we may assume from now on that every element $(H,\pi) \in \mathcal{H}_k$ with the minimum number of edges has $\clique(H) = 0$; it follows that for any such $(H,\pi)$ we in fact have $\theta_{\mathcal{H}_k}(H,\pi) = 0$.  Note that this implies we must have $0 \in C$.

Similarly, if there is some $(H,\pi) \in \mathcal{H}_k$ such that $H$ contains at least one independent set $W$ on $k'$ vertices and, for all $(H',\pi') \in \mathcal{H}_k$, we have $|E(H')| \leq |E(H)|$, then it is clear that $(H,\pi)$ must be good for $k$-independent sets (since, by edge-maximality, no labelled graph in $\mathcal{H}_k$ is isomorphic to any graph obtainable by adding edges to $H$).  Thus we may assume from now on that every edge-maximal element $(H,\pi) \in \mathcal{H}_k$ has no independent set on $k'$ vertices and so, by choice of $k$, satisfies $\clique(H) \geq r \binom{k-2}{k'-2} \binom{k'}{2}$.  Thus $C$ must contain an element $x$ where $x \geq r \binom{k-2}{k'-2} \binom{k'}{2}$.

Hence we may assume that $0 \in C$ and that the maximum element in $C$ is at least $r \binom{k-2}{k'-2} \binom{k'}{2}$; moreover, by definition of $r$ and $C$, we know that $C$ contains at most $r$ distinct values.  Thus, if the elements of $C$ are listed in order, there is some pair of consecutive elements which differ by more than $\binom{k-2}{k'-2} \binom{k'}{2}$; in other words, there exists some integer $s$ such that 
\begin{enumerate}
\item $s \in C$, and
\item for $s+1 \leq t \leq s + \binom{k-2}{k'-2} \binom{k'}{2}$, $t \notin C$.
\end{enumerate}
Fix $s \in \mathbb{N}$ satisfying these two conditions.  From now on we will say that a graph $(H,\pi) \in \mathcal{H}_k$ has ``few'' cliques if $\clique(H) \leq s$, and that it has ``many'' cliques if $\clique(H) > s + \binom{k-2}{k'-2} \binom{k'}{2}$.  By the reasoning above, it follows that, for every $(H,\pi) \in \mathcal{H}_k$, at least one of the following must hold:
\begin{enumerate}
\item $(H,\pi)$ has few cliques, or
\item $\theta_{\mathcal{H}_k}(H,\pi) > s + \binom{k-2}{k'-2} \binom{k'}{2}$, so there is some $(H',\pi') \in \mathcal{H}_k$ such that $|E(H)| = |E(H')|$ and $(H',\pi')$ has many cliques.
\end{enumerate}

Now, we fix an element $(H,\pi)$ with as few edges as possible from those graphs in $\mathcal{H}_k$ that contain many cliques (so $(H,\pi)$ contains many cliques and, for any other $(H',\pi')$ that contains many cliques, $|E(H)| \leq |E(H')|$).  This choice of $(H,\pi)$ implies that any element of $\mathcal{H}_k$ with strictly fewer edges than $H$ must contain few cliques.

Fix a set $U \in V(H)^{(k)}$ that induces a clique in $H$.  Suppose that some element $(H',\pi') \in \mathcal{H}_k$ is such that $H'$ can be obtained from $H$ by deleting one or more edges with both endpoints in $U$.  Since we will then have $|E(H')| < |E(H)|$, it follows that $H'$ contains few cliques.  Hence there are at least $\binom{k-2}{k'-2} \binom{k'}{2} + 1$ more $k'$-cliques in $H$ than in $H'$; as these two graphs differ only in edges that have both endpoints in $U$, it must be that each $k'$-clique in $H$ that is not a $k'$-clique in $H'$ intersects $U$ in at least two vertices.  But the number of $k'$-vertex sets that intersect $U$ in at least two vertices is at most $\binom{k'}{2}\binom{k-2}{k'-2}$, so 
it is not possible for $\binom{k-2}{k'-2} \binom{k'}{2} +1$ distinct $k'$-cliques in $H$ each to intersect $U$ in at least two vertices, giving a contradiction.

Thus we see that $(H,\pi)$ must in fact be good for $k'$-cliques, completing the proof.
\end{proof}

We are now ready to prove \#W[1]-hardness for this class of problems.

\begin{thm}
Let $\Phi$ be a family $(\phi_1,\phi_2,\ldots)$ of functions $\phi_k: \mathcal{L}(k) \rightarrow \{0,1\}$, infinitely many of which are not identically zero, such that the function mapping $k \mapsto \phi_k$ is computable.  Suppose that $|D_k| = o(k^2)$.  Then \paramcount{\genprob}($\Phi$) is \#W[1]-complete under fpt Turing reductions.
\label{bounded-hard}
\end{thm}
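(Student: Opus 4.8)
The plan is to prove membership and hardness separately; membership in \#W[1] is immediate from Proposition~\ref{in-W}, so all the work is in the hardness direction. For that I would verify the hypothesis of Lemma~\ref{hardness-condition}: it suffices to produce a computable function $g$ such that for every $k' \in \mathbb{N}$ there is some $k$ with $k' \le k \le g(k')$ and a labelled graph $(H,\pi) \in \mathcal{H}_{\phi_k}$ that is good for $k'$-cliques or good for $k'$-independent sets. Lemma~\ref{bounded-layers} already hands us such a graph as soon as $\mathcal{H}_{\phi_k}$ is non-empty, $k \ge 2^{2k'}$, and $|D_k|$ obeys the bound~\eqref{r-cond}. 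So the whole argument reduces to showing that, for each fixed $k'$, a value of $k$ meeting these three conditions exists and can be located effectively.

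The first thing I would do is simplify the right-hand side of~\eqref{r-cond}. Writing $\frac{k!}{(k-k')!} = k(k-1)\cdots(k-k'+1)$ and $\binom{k-2}{k'-2} = \frac{(k-2)(k-3)\cdots(k-k'+1)}{(k'-2)!}$, almost everything cancels and $\frac{k!/(k-k')!}{\binom{k-2}{k'-2}} = (k'-2)!\,k(k-1)$; hence~\eqref{r-cond} is equivalent to $|D_k| \le c(k')\,k(k-1)$, where
$$c(k') = \frac{2(k'-2)!}{k'(k'-1)}\cdot\frac{(2^{2k'}-k')!}{(2^{2k'})!}$$
is a fixed positive number depending only on $k'$ (for $k' \ge 2$; when $k' \le 1$ the problem \paramcount{Multicolour Clique} is trivial and needs no oracle call, so those cases can be handled directly). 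Since for fixed $k'$ we have $c(k')k(k-1) = \Theta(k^2)$ while the hypothesis gives $|D_k| = o(k^2)$, there is a threshold $K(k')$ such that~\eqref{r-cond} holds for every $k \ge K(k')$.

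Next I would bring in the assumption that infinitely many of the $\phi_k$ are not identically zero. This matters because $|D_k| = o(k^2)$ is a statement about all $k$, but Lemma~\ref{bounded-layers} is vacuous when $\mathcal{H}_{\phi_k} = \emptyset$; among the infinitely many $k$ for which $\mathcal{H}_{\phi_k} \ne \emptyset$ I would take the least one with $k \ge \max\{2^{2k'}, K(k')\}$, call it $k_0$. Then $\mathcal{H}_{\phi_{k_0}} \ne \emptyset$, $k_0 \ge 2^{2k'}$, and~\eqref{r-cond} holds at $k_0$, so Lemma~\ref{bounded-layers} applied with $\mathcal{H}_k = \mathcal{H}_{\phi_{k_0}}$ (so that $r = |D_{k_0}|$) produces a labelled graph in $\mathcal{H}_{\phi_{k_0}}$ good for $k'$-cliques or good for $k'$-independent sets. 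For effectiveness: since $k \mapsto \phi_k$ is computable and each $\mathcal{L}(k)$ is finite, we can decide whether $\phi_k$ is identically zero and compute $|D_k|$, and the quantities in~\eqref{r-cond} are computable; so we may simply search $k = k', k'+1, \ldots$ and return the first $k$ with $k \ge 2^{2k'}$, $\mathcal{H}_{\phi_k} \ne \emptyset$, and~\eqref{r-cond} satisfied. The existence argument guarantees this search halts, so taking $g(k')$ to be its output gives a computable $g$ with a witness at $k = g(k')$, and Lemma~\ref{hardness-condition} then yields \#W[1]-completeness of \paramcount{\genprob}$(\Phi)$ under fpt Turing reductions.

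The step I expect to demand the most care is the one just sketched — reconciling the ``$o(k^2)$'' growth condition, which quantifies over all $k$, with the restriction to the infinite set of $k$ on which $\phi_k$ is non-trivial, while keeping the chosen $k$ computable from $k'$. The algebraic collapse of~\eqref{r-cond} into the transparent form $|D_k| \le c(k')\,k(k-1)$ is what makes this reconciliation clean, so I would carry out that simplification very carefully before anything else.
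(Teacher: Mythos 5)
Your proposal is correct and follows the same route as the paper's own proof: apply Lemma~\ref{hardness-condition} by invoking Lemma~\ref{bounded-layers}, observing that the right-hand side of~\eqref{r-cond} is $\Theta(k^2)$ for fixed $k'$ so the $o(k^2)$ hypothesis eventually forces the bound to hold; your algebraic simplification to $c(k')\,k(k-1)$ is exactly the paper's choice of $\alpha$. The one point where you are slightly more careful than the paper is in explicitly restricting the search to $k$ with $\mathcal{H}_{\phi_k}\neq\emptyset$ (using the ``infinitely many non-trivial $\phi_k$'' hypothesis), a condition that Lemma~\ref{bounded-layers} needs but that the paper's proof glosses over when it ``sets $g(k')$ to be the least such $k$.''
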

\begin{proof}
We exploit Lemma \ref{hardness-condition} to prove the result.  By the assumption that $|D_k| = o(k^2)$, we know that, for any fixed $\alpha$, there exists $k_0 \in \mathbb{N}$ such that, for any $k \geq k_0$, we have $|D_k| < \alpha k(k-1)$.  Setting $\alpha =  \frac{(2^{2k'} - k')!(k'-2)!}{\binom{k'}{2} (2^{2k'})!} $, we therefore see that, for sufficiently large $k$,
\begin{align*}
|D_k| & \leq \left( \frac{(2^{2k'} - k')!(k'-2)!}{\binom{k'}{2} (2^{2k'})!} \right) k(k-1) \\
      & = \frac{1}{\binom{k-2}{k'-2}\binom{k'}{2}} \left( \frac{(2^{2k'}-k')!}{(2^{2k'})!} \frac{k!}{(k-k')!} \right).
\end{align*}
Set $g(k')$ to be the least such $k$ (and note that, by computability of the mapping $k \mapsto \phi_k$, $g$ is computable: we can perform an exhaustive search to find a suitable $k$).  Note that, for this value of $k$, $\mathcal{H}_{\phi_k}$ satisfies the condition of Lemma \ref{bounded-layers}.  Hence we know that there exists $(H,\pi) \in \mathcal{H}_{\phi_k}$ that is either good for $k'$-cliques or good for $k'$-independent sets.
\#W[1]-hardness now follows immediately from Lemma \ref{hardness-condition}.
\end{proof}

\subsection{Properties that are defined by $o(k^2)$ intervals of permitted edge densities}
\label{intervals}

In this section we give a strengthening of the above result for the special case in which the property $\Phi$ depends only on the number of edges in the subgraph.  In the following theorem, we will be considering \emph{integer intervals}, that is, sets of consecutive integers (e.g.~$\{a,a+1,\ldots,b\}$).

\begin{thm}
Let $\Phi$ be a family $(\phi_1,\phi_2,\ldots)$ of functions $\phi_k: \mathcal{L}(k) \rightarrow \{0,1\}$, infinitely many of which are not identically zero, such that the function mapping $k \mapsto \phi_k$ is computable.  For each $k$, let $\mathcal{I}_k = \{I_1,\ldots,I_r\}$ be a collection of disjoint integer intervals, where each $I_i \subset \{0,\ldots,\binom{k}{2}\}$, $\emptyset \neq \bigcup \mathcal{I}_k \neq \{0,\ldots,\binom{k}{2}\}$, and $|\mathcal{I}_k| = o(k^2)$.  Suppose that $\phi_k(H,\pi) = 1$ if and only if $|E(H)| \in \bigcup \mathcal{I}_k$.  Then \paramcount{\genprob}($\Phi$) is \#W[1]-complete.
\label{interval-hard}
\end{thm}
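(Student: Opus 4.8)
The plan is to apply Lemma~\ref{hardness-condition}, so we need to show that there is a computable function $g$ such that for every $k' \in \mathbb{N}$ there is some $k$ with $k' \le k \le g(k')$ and some $(H,\pi) \in \mathcal{H}_{\phi_k}$ that is good for $k'$-cliques or good for $k'$-independent sets. Fix $k'$, and choose $k$ large (to be determined) with $k \ge 2^{2k'}$ such that $\phi_k$ is not identically zero; computability of $k \mapsto \phi_k$ together with the hypothesis that infinitely many $\phi_k$ are not identically zero means this $k$ can be found by exhaustive search, so the resulting $g$ will be computable. The property is entirely determined by the collection of intervals $\mathcal{I}_k$, so $\mathcal{H}_{\phi_k}$ is precisely the set of labelled $k$-vertex graphs whose edge count falls in $\bigcup \mathcal{I}_k$.

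The key idea, strengthening the argument of Lemma~\ref{bounded-layers}, is that we no longer merely count distinct edge densities but exploit the interval structure: between two consecutive intervals of $\mathcal{I}_k$ there is a \emph{gap}, a maximal run of forbidden edge counts. Since $\bigcup \mathcal{I}_k \ne \{0,\ldots,\binom{k}{2}\}$ and $|\mathcal{I}_k| = o(k^2)$, the complement $\{0,\ldots,\binom{k}{2}\} \setminus \bigcup \mathcal{I}_k$ is a union of at most $|\mathcal{I}_k|+1 = o(k^2)$ integer intervals covering a set of size $\binom{k}{2} - |\bigcup\mathcal{I}_k| \ge 1$; hence if $\bigcup\mathcal{I}_k$ itself is not all of $\{0,\ldots,\binom{k}{2}\}$ then \emph{some} gap exists. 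The plan is to run exactly the clique-counting / edge-minimality argument of Lemma~\ref{bounded-layers} but applied to the gap structure. Concretely: if the minimum edge count attained in $\bigcup \mathcal{I}_k$ is positive, pick $(H,\pi)$ with that minimum edge count; it is automatically good for $k'$-cliques \emph{provided} $H$ contains a $k'$-clique, and if the minimum edge count is $0$ then the empty-ish graph at that density is trivially good for $k'$-independent sets. Dually for the maximum. So we may assume the edge-minimal member has no $k'$-clique and the edge-maximal member has no $k'$-independent set; by Corollary~\ref{ramsey-cor} and the choice $k \ge 2^{2k'}$, the edge-maximal member then has at least $\frac{(2^{2k'}-k')!}{(2^{2k'})!}\frac{k!}{(k-k')!}$ many $k'$-cliques. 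Defining $\theta_{\mathcal{I}_k}$ as the maximum clique count over graphs whose edge-count lies in $\bigcup\mathcal{I}_k$ at a given density (as in Lemma~\ref{bounded-layers}, but now one only needs to track one representative value $s$ on either side of a gap), we look for a gap in $\mathcal{I}_k$ of length more than $\binom{k-2}{k'-2}\binom{k'}{2}$ that also ``separates'' a low-clique regime from a high-clique regime.

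Here is where the interval hypothesis buys the strengthening over Theorem~\ref{bounded-hard}. Because the forbidden densities form $o(k^2)$ intervals rather than an arbitrary set of $o(k^2)$ values, the \emph{allowed} densities also form $o(k^2)$ intervals, and consecutive allowed intervals are separated by gaps. The clique-count function, restricted to the allowed densities, takes at most $|\bigcup\mathcal{I}_k|+1$... no: the point is different. We set $\alpha = \frac{(2^{2k'} - k')!\,(k'-2)!}{\binom{k'}{2}(2^{2k'})!}$ as in the proof of Theorem~\ref{bounded-hard}, so that for $k$ large enough $|\mathcal{I}_k| \le \alpha\,k(k-1) = \frac{1}{\binom{k-2}{k'-2}\binom{k'}{2}}\!\left(\frac{(2^{2k'}-k')!}{(2^{2k'})!}\frac{k!}{(k-k')!}\right)$, and let $g(k')$ be the least such $k$ that is also $\ge 2^{2k'}$ and has $\phi_k \not\equiv 0$. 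With $r = |\mathcal{I}_k|$ intervals and a total clique-count ``span'' of at least $r\binom{k-2}{k'-2}\binom{k'}{2}$ between the edge-minimal (clique-free, hence $\theta$-value $0$) representative and the edge-maximal (many-clique) representative, by pigeonhole there must be an interval $I_j \in \mathcal{I}_k$ together with the gap immediately preceding it (or following it) across which $\theta$ jumps by more than $\binom{k-2}{k'-2}\binom{k'}{2}$. Pick the edge-minimal $(H,\pi)$ among graphs on the high side of this jump with many cliques; then any $(H',\pi') \in \mathcal{H}_{\phi_k}$ obtainable from $H$ by deleting edges inside a fixed $k'$-clique $U$ has strictly fewer edges, hence lands either in the gap — impossible, since $\phi_k(H',\pi')$ would be $0$, contradicting $(H',\pi') \in \mathcal{H}_{\phi_k}$ — or on the low side with few cliques, which forces more than $\binom{k-2}{k'-2}\binom{k'}{2}$ of $H$'s $k'$-cliques to meet $U$ in $\ge 2$ vertices, exceeding the available count $\binom{k'}{2}\binom{k-2}{k'-2}$, a contradiction. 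Hence $(H,\pi)$ is good for $k'$-cliques, and Lemma~\ref{hardness-condition} gives \#W[1]-completeness. The main obstacle I anticipate is bookkeeping the gap/jump pigeonhole cleanly — in particular making sure the ``gap absorbs edge-deletions'' step is stated so that it genuinely uses the interval structure (a deleted-edge graph cannot sneak back into $\mathcal{H}_{\phi_k}$ through the gap) rather than merely re-deriving Theorem~\ref{bounded-hard}; once that is set up correctly, the counting is essentially identical to Lemma~\ref{bounded-layers}.
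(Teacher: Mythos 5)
Your high-level plan---invoke Lemma~\ref{hardness-condition} after exhibiting a member of $\mathcal{H}_{\phi_k}$ good for $k'$-cliques or $k'$-independent sets---is the right one, and your instinct that the ``gap absorbs deletions'' is the correct mechanism. But the way you try to locate the relevant $(H,\pi)$ does not work, for three reasons.

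First, the clique-count pigeonhole of Lemma~\ref{bounded-layers} hinges on the set $C$ of $\theta$-values having at most $|D_k|$ elements; that is where $|D_k|=o(k^2)$ enters. In the present theorem $|\mathcal{I}_k|=o(k^2)$ but $|D_k|=|\bigcup\mathcal{I}_k|$ can be $\Theta(k^2)$, so $C$ may be too large to contain the required jump of size $\binom{k-2}{k'-2}\binom{k'}{2}$. Declaring $\theta_{\mathcal{I}_k}$ to ``track one representative per interval'' does not rescue this: the contradiction step of Lemma~\ref{bounded-layers} needs the few/many dichotomy to hold for \emph{every} element of $\mathcal{H}_{\phi_k}$, not just those at interval endpoints, and within a single allowed interval $\theta$ can sweep from $0$ to its maximum.

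Second, your case analysis for $(H',\pi')$ obtained by deleting $\le\binom{k'}{2}$ edges in $U$ lists ``in the gap'' or ``on the low side,'' but omits a third possibility: $|E(H')|$ may still lie in the \emph{same} allowed interval as $|E(H)|$, just lower. Unless $|E(H)|$ sits at the very bottom of its interval, this case is not excluded, and your edge-minimality of $H$ among many-clique graphs does not imply $H'$ has few cliques, because the dichotomy from point one is unavailable. Third, you never invoke Proposition~\ref{complement}, yet without the complement reduction nothing forces the forbidden set $\{0,\ldots,\binom{k}{2}\}\setminus\bigcup\mathcal{I}_k$ to contain any long run at all (take $\bigcup\mathcal{I}_k=\{0,\ldots,\binom{k}{2}\}\setminus\{0\}$: the only gap has length $1$). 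You also demand a gap of length $>\binom{k-2}{k'-2}\binom{k'}{2}$, which grows with $k$ and need not exist; only length $\binom{k'}{2}+1$ (independent of $k$) is needed. The paper's proof is in fact much simpler and does not reuse the clique-counting machinery of Lemma~\ref{bounded-layers} at all: apply Proposition~\ref{complement} to assume $|\bigcup\mathcal{I}_k|\le\frac{1}{2}\bigl(\binom{k}{2}+1\bigr)$, note that the forbidden set splits into at most $|\mathcal{I}_k|+1$ integer intervals so that one of them, $J$, has length $\ge\binom{k'}{2}+1$, and then place $(H,\pi)$ with $|E(H)|$ on the allowed density immediately adjacent to $J$ (on either side), realising a $k'$-clique or $k'$-independent set as appropriate; deleting or adding up to $\binom{k'}{2}$ edges then lands entirely inside $J$ and hence outside $\mathcal{H}_{\phi_k}$.
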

\begin{proof}
We claim that we may assume, without loss of generality, that $|\bigcup \mathcal{I}_k| \leq \frac{1}{2} (\binom{k}{2} + 1)$.  To see that this is indeed the case, suppose that in fact $|\bigcup \mathcal{I}_k| > \frac{1}{2} (\binom{k}{2} + 1)$, and consider the family $\Phi' = (\phi_1',\phi_2',\ldots)$ of functions $\phi_k': \mathcal{L}(k) \rightarrow \{0,1\}$ defined by 
$$\phi_k' = 1 - \phi_k.$$
Note that the mapping $k \mapsto \phi_k'$ is clearly computable by computability of $k \mapsto \phi_k$, and that there exists a collection $\mathcal{I}_k'$ of disjoint integer intervals, where each $I_i' \subset \{0, \ldots, \binom{k}{2}\}$, $\emptyset \neq \bigcup \mathcal{I}_k' \neq \{0,\ldots,\binom{k}{2}\}$, $|\mathcal{I}_k'| = o(k^2)$ and $\bigcup \mathcal{I}_k' = \{0,\ldots,\binom{k}{2}\} \setminus \bigcup \mathcal{I}_k$.  Thus $\phi_k'(H,\pi) = 1$ if and only if $|E(H)| \in \bigcup \mathcal{I}_k'$, and $|\bigcup \mathcal{I}_k'| \leq \frac{1}{2} (\binom{k}{2} + 1)$.  By Proposition \ref{complement}, it therefore suffices to prove \#W[1]-completeness in the case that $|\bigcup \mathcal{I}_k| \leq \frac{1}{2} (\binom{k}{2} + 1)$.

We do this using Lemma \ref{hardness-condition}.  Since $|\mathcal{I}_k| = o(k^2)$, it follows that, for any $k' \in \mathbb{N}$, there exists $k \in \mathbb{N}$ such that $(|\mathcal{I}_k| + 1)\binom{k'}{2} + 1 < \frac{1}{2} (\binom{k}{2} + 1)$; we define $g(k')$ to be the least such $k$ (note that under this definition the function $g$ is clearly computable).  In order to apply Lemma \ref{hardness-condition} to show \#W[1]-hardness, it suffices to demonstrate that there exists $(H,\pi) \in \mathcal{H}_{\phi_k}$ which satisfies one of the two conditions in the statement of Lemma \ref{hardness-condition}.

Note that $\{0,\ldots,\binom{k}{2}\} \setminus \bigcup \mathcal{I}_k$ must be expressible as the union of at most $|\mathcal{I}_k| + 1$ disjoint integer intervals; hence, as 
$$|\{0,1,\ldots,\tbinom{k}{2}\} \setminus \bigcup \mathcal{I}_k | \geq \tfrac{1}{2} (\tbinom{k}{2} + 1) \geq (|\mathcal{I}_k| + 1)\tbinom{k'}{2} + 1,$$
it follows that at least one of these integer intervals, $J$, must contain at least $\binom{k'}{2} + 1$ distinct integers.  

Suppose first that $0 \notin J$.  Then there exists some $d_1 \in \{0,1,\ldots,\binom{k}{2}\}$ such that $d_1 \in \bigcup \mathcal{I}_k$ but $d_1 + 1 \in J$.  Note that, as $J$ contains at least $\binom{k'}{2} + 1$ distinct integers, we must have $d_1 < \binom{k}{2} - \binom{k'}{2}$.  Thus there exists a labelled graph $(H,\pi) \in \mathcal{L}(k)$ with $d_1$ edges that contains an independent set $W$ on $k'$ vertices; since by assumption all elements of $\mathcal{L}(k)$ with exactly $d_1$ edges belong to $\mathcal{H}_{\phi_k}$, we therefore have $(H,\pi) \in \mathcal{H}_{\phi_k}$.  However, as there is no $(H',\pi') \in \mathcal{H}_{\phi_k}$ with $|E(H)| < |E(H')| \leq |E(H)| + \binom{k'}{2}$, it is clear that there is no $(H',\pi') \in \mathcal{H}_{\phi_k}$ which can be obtained from $H$ by adding edges in $W$.  Thus we satisfy the second condition of Lemma \ref{hardness-condition}.

Now suppose that $0 \in J$.  Since $\bigcup \mathcal{I}_k \neq \emptyset$, we must have $\binom{k}{2} \notin J$, and so there must exist some $d_2 \in \{0,1,\ldots,\binom{k}{2}\}$ such that $d_2 \in \bigcup \mathcal{I}_k$ but $d_2 - 1 \in J$.  Note that, as $J$ contains at least $\binom{k'}{2} + 1$ distinct integers, we must have $d_2 > \binom{k'}{2}$.  Thus there exists a labelled graph $(H,\pi) \in \mathcal{L}(k)$ with $d_2$ edges that contains a clique $U$ on $k'$ vertices; since by assumption all elements of $\mathcal{L}(k)$ with exactly $d_2$ edges belong to $\mathcal{H}_{\phi_k}$, we therefore have $(H,\pi) \in \mathcal{H}_{\phi_k}$.  However, as there is no $(H',\pi') \in \mathcal{H}_{\phi_k}$ with $|E(H)| - \binom{k'}{2} \leq |E(H')| < |E(H)|$, it is clear that there is no $(H',\pi') \in \mathcal{H}_{\phi_k}$ which can be obtained from $H$ by deleting edges in $U$.  Thus we satisfy the first condition of Lemma \ref{hardness-condition}.

Hence we see that there must be some $(H,\pi) \in \mathcal{H}_{\phi_k}$ which satisfies at least one of the conditions of Lemma \ref{hardness-condition}; this immediately implies the \#W[1]-hardness of \paramcount{\genprob}$(\Phi)$ in this case.
\end{proof}

\section{Conclusions and Open Problems}

We have proved \#W[1]-completeness for a range of parameterised subgraph-counting problems.  In particular, we demonstrated that \paramcount{\genprob}($\Phi$) is \#W[1]-complete whenever $\Phi$ is such that one of the following holds:
\begin{itemize}
\item $|\{|E(H)|: (H,\pi) \in \mathcal{H}_{\phi_k}\}| = o(k^2)$, or
\item $\phi_k(H,\pi) = 1$ if and only if $|E(H)| \in \bigcup \mathcal{I}_k$, where $\mathcal{I}_k$ is a collection of integer intervals contained in $\{0,\ldots,\binom{k}{2}\}$ and $|\mathcal{I}_k| = o(k^2)$.
\end{itemize}
These results extend some existing hardness results concerning parameterised subgraph-counting problems, and additionally include, for example, the problems of counting planar subgraphs, subgraphs with treewidth at most $t$ for any fixed $t$, and regular subgraphs, as well as the problem of counting $k$-vertex subgraphs with at least $d(k)$ edges, for any function $d$ where $0<d(k)<\binom{k}{2}$.

A natural question arising from the second class of problems we consider is whether all non-trivial properties that depend only on the number of edges in the subgraph are in fact \#W[1]-hard, or whether there might exist a fixed parameter algorithm for some such problems that are not covered by our result, such as counting the number of $k$-vertex subgraphs having an even number of edges.

It should be noted that the methods used to demonstrate hardness in this paper are based on the hardness of the multicolour version of the problem (demonstrated for appropriate $\Phi$ in Lemma \ref{hardness-condition}) and so are only applicable to problems \paramcount{\genprob}($\Phi$) where  \paramcount{\genprobcol}($\Phi$) is also \#W[1]-hard.  However, there are known examples of \#W[1]-complete parameterised counting problems whose multicolour versions are in fact fixed parameter tractable, such as \paramcount{Path}, \paramcount{Cycle} and \paramcount{Matching} (the multicolour versions of these problems are all fixed parameter tractable by \cite{arvind02}, as they involve counting embeddings of graphs of bounded treewidth).  A challenge for future research, therefore, would be to develop new kinds of constructions that can be used to show hardness of problems whose multicolour versions are fixed parameter tractable.

\bibliographystyle{amsplain}
\bibliography{../param_counting_refs}

\providecommand{\bysame}{\leavevmode\hbox to3em{\hrulefill}\thinspace}
\providecommand{\MR}{\relax\ifhmode\unskip\space\fi MR }
\providecommand{\MRhref}[2]{%
  \href{http://www.ams.org/mathscinet-getitem?mr=#1}{#2}
}
\providecommand{\href}[2]{#2}
\begin{thebibliography}{10}

\bibitem{arvind02}
V.~Arvind and Venkatesh Raman, \emph{Approximation algorithms for some
  parameterized counting problems}, ISAAC 2002 (P.~Bose and P.~Morin, eds.),
  LNCS, vol. 2518, Springer-Verlag Berlin Heidelberg, 2002, pp.~453--464.

\bibitem{chen07}
Yijia Chen and J.~Flum, \emph{On parameterized path and chordless path
  problems}, Computational Complexity, 2007. CCC '07. Twenty-Second Annual IEEE
  Conference on, 2007, pp.~250--263.

\bibitem{chen08}
Yijia Chen, Marc Thurley, and Mark Weyer, \emph{Understanding the complexity of
  induced subgraph isomorphisms}, Automata, Languages and Programming (Luca
  Aceto, Ivan Damg{\aa}rd, Leslie~Ann Goldberg, Magn{\'u}s~M. Halld{\'o}rsson,
  Anna Ing{\'o}lfsd{\'o}ttir, and Igor Walukiewicz, eds.), Lecture Notes in
  Computer Science, vol. 5125, Springer Berlin Heidelberg, 2008, pp.~587--596.

\bibitem{radu13}
Radu Curticapean, \emph{Counting matchings of size k is \#{W}[1]-hard},
  Automata, Languages, and Programming (Fedor~V. Fomin, R{\=u}si{\c n}{\v s}
  Freivalds, Marta Kwiatkowska, and David Peleg, eds.), Lecture Notes in
  Computer Science, vol. 7965, Springer Berlin Heidelberg, 2013, pp.~352--363.

\bibitem{radu14}
Radu Curticapean and D{\'a}niel Marx, \emph{Complexity of counting subgraphs:
  Only the boundedness of the vertex-cover number counts}, 55th Annual IEEE
  Symposium on Foundations of Computer Science, FOCS 2014, 2014.

\bibitem{dalmau04}
V\'{i}ctor Dalmau and Peter Jonsson, \emph{The complexity of counting
  homomorphisms seen from the other side}, Theoretical Computer Science
  \textbf{329} (2004), no.~1–3, 315 -- 323.

\bibitem{erdos-szekeres}
P.~Erd{\H{o}}s and G.~Szekeres, \emph{A combinatorial problem in geometry},
  Compositio Math. \textbf{2} (1935), 464--470.

\bibitem{fellows09}
M.~Fellows, D.~Hermelin, F.~Rosamond, and S.~Vialette, \emph{On the
  parameterized complexity of multiple-interval graph problems}, Theoretical
  Computer Science \textbf{410} (2009), 53--61.

\bibitem{flum04}
J.~Flum and M.~Grohe, \emph{The parameterized complexity of counting problems},
  SIAM Journal on Computing \textbf{33} (2004), no.~4, 892--922.

\bibitem{flumgrohe}
\bysame, \emph{Parameterized complexity theory}, Springer, 2006.

\bibitem{connected}
Mark Jerrum and Kitty Meeks, \emph{The parameterised complexity of counting
  connected subgraphs and graph motifs}, arXiv:1308.1575v2[cs.CC], August 2013.

\end{thebibliography}


\end{document}